\documentclass[11pt]{article}
\usepackage[margin=1in]{geometry}
\usepackage{amsmath,amssymb,amsthm}
\usepackage{booktabs,array}
\usepackage{microtype}
\usepackage{placeins}
\usepackage{url}
\usepackage[hidelinks]{hyperref}
\hypersetup{
  pdftitle={A Lower Bound of 21 for 3 by 3 Matrix Multiplication over F2},
  pdfauthor={Chengu Wang}
}

\newtheorem{theorem}{Theorem}[section]
\newtheorem{lemma}[theorem]{Lemma}
\newtheorem{proposition}[theorem]{Proposition}
\newtheorem{corollary}[theorem]{Corollary}
\theoremstyle{remark}

\newcommand{\F}{\mathbb F}
\newcommand{\R}{\mathbf R}
\newcommand{\mmt}[3]{\langle #1,#2,#3\rangle}
\newcommand{\rank}{\operatorname{rank}}
\newcommand{\spn}{\operatorname{span}}
\newcommand{\GL}{\mathrm{GL}}
\newcommand{\Stab}{\operatorname{Stab}}
\newcommand{\PG}{\mathrm{PG}}
\newcommand{\im}{\operatorname{im}}
\newcommand{\T}{\mathsf T}
\title{A Lower Bound of 21 for\\
$3\times3$ Matrix Multiplication over $\mathbb{F}_2$}
\author{Chengu Wang\\\texttt{wangchengu@gmail.com}}
\date{}

\newcommand{\BinaryFlattenCount}{14}
\newcommand{\BinaryDegenerateCount}{234}
\newcommand{\BinaryForcedCount}{14}
\newcommand{\BinaryBacktrackingCount}{207}
\newcommand{\BinaryRankOneSpanCount}{27}
\newcommand{\TernaryFlattenCount}{11}
\newcommand{\TernaryDegenerateCount}{8}
\newcommand{\TernaryForcedCount}{2}
\newcommand{\TernaryBacktrackingCount}{4}
\newcommand{\TernaryRankOneSpanCount}{6}

\begin{document}
\maketitle

\begin{abstract}
  We prove that $3\times3$ matrix multiplication over $\mathbb{F}_2$ has
  bilinear complexity at least $21$, improving the previous lower bound of
  $20$. Lower bounds for restrictions of the first input constrain how many
  first factors of a decomposition can lie in each subspace. For a hypothetical
  $20$-term decomposition, the strengthened restriction bounds force all first
  factors of matrix rank at least two into a single coset of a three-dimensional
  rank-one subspace. An exhaustive computation finds no admissible first-factor
  profile with $20$ terms. We also prove that $2\times3$ by $3\times3$ matrix
  multiplication over $\mathbb{F}_3$ has rank exactly $15$: three profiles with
  $14$ terms survive the corresponding computation up to symmetry, and short
  restriction arguments exclude them. We combine Wang's automated
  framework for tensor-rank lower bounds with D'Ambrosio's capacity-and-profile
  strategy. Our computational contributions are rank-one-span searches that
  strengthen the subspace lower-bound table and a direct, symmetry-reduced
  profile enumerator that enforces all subspace capacities simultaneously.
\end{abstract}

\section{Introduction}\label{sec:intro}

Write $\R_{\F}(\mmt{n_0}{n_1}{n_2})$ for the tensor rank over $\F$ of
multiplication of an $n_0\times n_1$ matrix by an $n_1\times n_2$ matrix.
Equivalently, it is the least number of nonscalar multiplications in a
bilinear algorithm for this map. Determining this number remains difficult
even for small formats: $3\times3$ matrix multiplication is the smallest
square case whose rank is still unknown.

\subsection{Previous work}
For $2\times2$ matrix multiplication,
Strassen~\cite{strassen1969gaussian} gave an algorithm with seven
products and integer coefficients, and
Winograd~\cite{winograd1971multiplication} proved that fewer products
cannot suffice over $\mathbb Q$. Hopcroft and
Kerr~\cite{hopcroft1971minimizing} established the lower bounds
$\R_{\F_2}(\mmt{2}{2}{2})\ge7$ and
$\R_{\F_2}(\mmt{2}{3}{3})\ge15$, among others. Their
$15$-multiplication algorithm for $\mmt{2}{3}{3}$ is valid over every
field, so the latter lower bound determines its rank over $\F_2$.

Further lower bounds exploit the structure of small fields.
Ja'Ja' and Takche~\cite{ja1985improved} showed that
$\R_{\F_2}(\mmt{3}{2}{n})\ge\lceil105n/23\rceil$.
For square matrices, Bshouty~\cite{bshouty1989lower} obtained the
asymptotic bound
$\R_{\F_2}(\mmt{n}{n}{n})\ge\tfrac52n^2-o(n^2)$.
Shpilka~\cite{shpilka2001lower} further strengthened asymptotic lower
bounds over small fields.

Bl\"aser obtained several bounds that hold over arbitrary fields.
For $n\ge l\ge2$, his rectangular bound is
$\R_{\F}(\mmt{l}{m}{n})\ge lm+mn+l-m+n-3$~\cite{blaser1999lower}.
Although proved over algebraically closed fields, it also holds over
any field because extending the ground field cannot increase tensor
rank. In particular, it gives $\R_{\F}(\mmt{2}{3}{3})\ge14$.
He also proved
$\R_{\F}(\mmt{n}{n}{n})\ge\tfrac52n^2-3n$ for $n\ge3$ over
arbitrary fields~\cite{blaser1999fivehalf}, and later
$\R_{\F}(\mmt{n}{m}{n})\ge2mn+2n-m-2$ for
$m\ge n\ge3$~\cite{blaser2003complexity}. The last inequality gives
$19$ for $\mmt{3}{3}{3}$. On the upper-bound side,
Laderman~\cite{laderman1976noncommutative} gave a $23$-product
algorithm over the integers, hence over every field. These bounds
left the $3\times3$ rank between $19$ and $23$ for more than two decades.

More recently, Wang's automated finite-field
framework~\cite{wang2026automated} proved the binary lower bounds
$20$, $25$, and $29$ for $\mmt{3}{3}{3}$, $\mmt{3}{3}{4}$, and
$\mmt{3}{4}{4}$, respectively.
Starting from the framework's bound $19$ for $\mmt{2}{3}{4}$,
D'Ambrosio~\cite{dambrosio2026exact} established its exact rank
$20$ over $\F_2$ by classifying possible first-factor supports and
excluding them through restrictions. This strategy is the starting
point for the present paper.

\subsection{Our results}
Our main result raises the lower bound for $3\times3$ matrix
multiplication over $\F_2$ from $20$ to $21$.
\begin{theorem}\label{thm:binary}
  $\R_{\F_2}(\mmt{3}{3}{3})\ge21$.
\end{theorem}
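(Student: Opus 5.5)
We argue by contradiction: suppose $\mmt{3}{3}{3}=\sum_{i=1}^{20} a_i\otimes b_i\otimes c_i$ over $\F_2$, where each first factor $a_i$ is a nonzero $3\times3$ matrix over $\F_2$ (an element of the first-input space $V\cong\F_2^9$). The basic tool is restriction. For a subspace $W\le V$ of codimension $k$, restricting the first input to $W$ gives a tensor $T_W$ whose rank is at most the number of terms $i$ with $a_i\notin W^{\perp\text{-kernel}}$, that is, with $a_i$ not annihilated by the projection $V\to V/W$. Turned around, if $\rank T_{U}\ge r_U$ for the quotient-restriction associated with a subspace $U$, then at most $20-r_U$ of the $a_i$ lie in $U$. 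So the first step is to build a table of \emph{capacities} $\kappa(U)=20-\R(T_U)$ for subspaces $U$ of $V$, one per $\GL_3\times\GL_3$-orbit (acting by $a\mapsto PaQ$, together with transpose where the tensor symmetry allows it). The lower bounds $\R(T_U)$ will come from Wang's automated framework~\cite{wang2026automated}. I will strengthen them by rank-one-span searches: when $T_U$ has a large subspace spanned by rank-one slices, a direct search can rule out low-rank decompositions more sharply than generic bounds do.

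The second step is structural. I will show that the capacities force every $a_i$ of matrix rank $\ge2$ into one coset $a+L$, where $L$ is a three-dimensional subspace consisting of rank-$\le1$ matrices, i.e.\ $L=x\otimes\F_2^3$ or $\F_2^3\otimes y$. The counting runs as follows. The $a_i$ must span $V$ and satisfy every capacity constraint. Consider hyperplanes $H$. The restriction to $H$ is a degeneration of $\mmt{3}{3}{3}$ whose rank is at least roughly $19$, so each hyperplane contains few $a_i$, while codimension-2 and codimension-3 spaces contain even fewer. Averaging over hyperplanes, each nonzero $a$ lies in half of them. Combining this averaging with the per-orbit capacities of rank-$2$ and rank-$3$ hyperplane types should pin down how the high-rank factors are distributed.

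The third step is an exhaustive computation. I will enumerate \emph{first-factor profiles}, meaning multisets of $20$ nonzero vectors of $V$ that meet all capacities simultaneously. The enumeration is symmetry-reduced modulo the stabilizer of the coset structure from step two. It proceeds by backtracking: add vectors in canonical order and prune whenever a subspace in the table exceeds its capacity. Step two makes this feasible, because it collapses the search from $(2^9-1)^{20}$ to a small space: a few high-rank vectors in a coset of size $8$, plus rank-one matrices. If the enumeration returns no admissible profile, the theorem follows. If a few profiles survive, I will exclude each one by an explicit restriction to a well-chosen subspace and a hand argument, as in D'Ambrosio~\cite{dambrosio2026exact}.

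The main obstacle is obtaining capacity bounds strong enough, especially for the many codimension-$1$ and codimension-$2$ orbits, that the structural reduction actually goes through. Without tight values the profile search does not terminate. For this I would first try the rank-one-span search on each restricted tensor, and fall back on computing capacities for larger subspaces as needed.
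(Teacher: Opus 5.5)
Your overall architecture matches the paper's: a restriction table strengthened by rank-one-span searches, capacities $20-L$, a reduction to a single coset of a rank-one block, and then a symmetry-reduced profile enumeration. The gap is in Step 2, which is the crux: the mechanism you sketch cannot produce the coset structure.

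First, the duality is crossed. The bound of about $19$ is for restricting the first input to a hyperplane $\ker U$ of the input space $A$. By the capacity inequality it limits how many first factors equal $U$, so it is a \emph{point} capacity of $20-19=1$; it does not say that a hyperplane of first factors contains few $a_i$. A hyperplane $E$ of the first-factor space has $E^\perp$ one-dimensional, with $L(E^\perp)\in\{3,6,9\}$, so its capacity is $17$, $14$ or $11$. Meanwhile, twenty factors spanning $\F_2^9$ put on average $20\cdot255/511\approx10$ factors in a hyperplane. Averaging over hyperplanes therefore yields no contradiction. More fundamentally, an incidence count yields inequalities, never the affine statement $W\subseteq U_0+H$.

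What is needed is a pairwise, algebraic argument driven by codimension-two restrictions. You must isolate the specific table entries $L(E^\perp)=19$ for every two-dimensional $E\subseteq A^*$ that contains no rank-one matrix. These are exactly the entries that the rank-one-span searches raise from $18$ to $19$, and with them such pencils have capacity one. Then for two first factors $U,U'$ of matrix rank at least two, the pencil $\{0,U,U',U+U'\}$ must contain a rank-one matrix, so $U+U'$ has rank one. Next, $xc^{\T}+yd^{\T}$ has rank one only when $x=y$ or $c=d$. Hence a set of rank-one matrices with pairwise rank-one sums lies in a block $H_x$ or $H^c$. Applying this to the differences $U+U_0$ for a fixed $U_0\in W$ gives $W\subseteq U_0+H$.

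Without identifying these pencil bounds and this lemma, your Step 2 is unsupported. Your Step 3 inherits the gap, since its feasibility rests on the reduction to one coset plus rank-one matrices.
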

Together with the known upper bound, this gives
$21\le\R_{\F_2}(\mmt{3}{3}{3})\le23$.

Shortly after the first version of this paper was posted,
Tahir~\cite{tahir2026rank21} obtained the same bound independently by a
different route. That review draft also starts from the capacity
inequalities implied by the restriction table of Wang's
framework~\cite{wang2026automated}, with five table entries strengthened
by one, and states the bound conditionally on $110$ restriction bounds
that its repository replays from the published certificates and from its
own evidence. It excludes a $20$-term decomposition by forcing the first
factors to be distinct singular matrices, reducing to seven normal forms
by symmetry, and refuting each with a branch-and-bound counting tree
carrying exact rational dual certificates. It uses neither the
common-block reduction nor the profile enumeration of the present paper,
so the two arguments corroborate each other.

Our second result determines the rank of $2\times3$ by $3\times3$
matrix multiplication over $\F_3$.
\begin{theorem}\label{thm:ternary}
  $\R_{\F_3}(\mmt{2}{3}{3})=15$.
\end{theorem}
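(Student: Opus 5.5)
The statement has two halves. The upper bound $\R_{\F_3}(\mmt{2}{3}{3})\le15$ is immediate: Hopcroft and Kerr's $15$-multiplication algorithm for $\mmt{2}{3}{3}$ is valid over every field. All the work is in the lower bound $\R_{\F_3}(\mmt{2}{3}{3})\ge15$, i.e.\ in excluding a $14$-term decomposition over $\F_3$. Bl\"aser's bound already gives $\ge14$ over any field.

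I would follow the same capacity-and-profile strategy as for Theorem~\ref{thm:binary}. Suppose a decomposition $\sum_{i=1}^{14}a_i\otimes b_i\otimes c_i$ exists, where the first factors $a_i$ are $2\times3$ matrices over $\F_3$. For each subspace $U$ of the first input space, restricting the first input to a complement-type quotient kills the terms with $a_i\in U$. This gives a capacity inequality
\[
\#\{i: a_i\in U\}\le 14-\R(\text{restriction}).
\]
The restriction ranks are taken from the lower-bound table of Wang's framework. Where possible I would strengthen entries by rank-one-span searches: these show that a restricted tensor whose rank would equal its bound must have a span containing few rank-one elements, so the bound can be raised by one.

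Next, modulo the symmetry group ($\GL_2\times\GL_3\times\GL_3$ acting on the format, plus scalars), I would enumerate all multisets of $14$ projective points of $\PG(5,3)$ satisfying every capacity simultaneously. The enumeration proceeds orbit by orbit with canonical augmentation. I would expect only a handful of profiles to survive; the abstract indicates three.

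For each surviving profile, I would find a subspace $U$ containing many $a_i$. Restricting to the terms outside $U$, or projecting along a well-chosen direction, then yields a smaller known format such as $\mmt{2}{2}{3}$ or $\mmt{1}{3}{3}$, whose rank over $\F_3$ is known (e.g.\ $\R(\mmt{2}{2}{3})=11$). That rank would exceed the number of remaining terms, giving the contradiction. If one projection does not suffice, I would combine two restrictions: for instance, zeroing a row of the first matrix, and counting how many $a_i$ become zero or collinear.

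The main obstacle is making the enumeration exhaustive yet tractable. Over $\F_3$ the projective space has $364$ points, and the capacities must be strong enough to prune. So the key step is obtaining sufficiently strong subspace bounds, especially for low-dimensional rank-one subspaces, through the rank-one-span searches. The final hand exclusion of the three profiles should be short once their structure is explicit.
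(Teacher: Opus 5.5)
Your plan is correct in outline, and its last step is a genuinely different (and shorter) route than the paper's.

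The first two stages coincide with the paper's. You build the $\F_3$ table from Wang's framework, strengthened by rank-one-span searches; these test exactly whether the slice space lies in an $r$-dimensional rank-one-spanned space (Proposition~\ref{prop:slices}), not a count of rank-one elements. You then use the capacity inequality $m(E)\le 14-L(E^\perp)$, which comes from restricting the first input to the annihilator $E^\perp$ (a subspace, not a quotient). Finally, a symmetry-reduced enumeration leaves three orbits $P_1,P_2,P_3$ of sets of $14$ distinct rank-one points.

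You differ in the final exclusion. The paper restricts to the two-dimensional spaces $S_z=\{vz^{\T}\}$, where the restricted tensor has flattening rank six. It proves a tight/one-excess lemma: six active terms force $V_s\in H_z$ and a dual output basis $p_sq_s^{\T}$, and seven force the images of the $V_s$ in $B^*/H_z$ to span at most a line. It then derives contradictions from overlapping pairs $z,z'$.

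Your counting argument also works, but you should make explicit why it can. A surviving profile satisfies every capacity in the table by construction, so a single restriction count succeeds only with a bound the table does not contain. Your example is exactly such a bound. The $13$ spaces $\{X: Xz=0\}$, for which $T_S\cong\mmt{2}{2}{3}$, carry only $L=10$ in the table. By contrast, $\R(\mmt{2}{2}{3})=11$ by Alekseyev's theorem, proved over arbitrary fields and hence valid over $\F_3$. Every $P_i$ contains all four points $xc_1^{\T}$, $x\in\{e_1,e_2,f,g\}$. Restricting to $\{X: Xc_1=0\}$ therefore leaves $10$ terms for a rank-$11$ tensor. Equivalently, imposing capacity $3$ on the $13$ column blocks would make the enumeration itself return nothing.

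The trade-off is that your route is shorter but rests on a nontrivial external theorem. The paper's exclusion is self-contained, using only flattening and linear algebra. Your other suggestion, $\mmt{1}{3}{3}$, would give nothing, since its rank $9$ already equals the table value.
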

This closes the gap between the previous lower bound $14$ and the
known upper bound $15$ over $\F_3$. Together with the equality over $\F_2$, it
supports the conjecture that $\R_{\F}(\mmt{2}{3}{3})=15$ for every
field $\F$.

\paragraph{The method.}
Let $T$ be the multiplication tensor and $A$ its first input space.
A \emph{subspace lower-bound table} assigns to every subspace $S\subseteq A$
a lower bound $L(S)\le\R(T_S)$, where $T_S$ is multiplication with the
first input restricted to $S$. Thus the table is a collection of rank
lower bounds, stored with one entry per symmetry orbit of subspaces.
We compute it using Wang's framework~\cite{wang2026automated},
strengthened by rank-one-span searches and deeper backtracking.
In an $r$-term decomposition, restriction to $S$ removes the terms whose
first factors vanish on $S$, so at most $r-L(S)$ terms can do so.
These limits are the \emph{capacities}. Following
D'Ambrosio~\cite{dambrosio2026exact}, we enumerate first-factor lists,
or \emph{profiles}, satisfying all capacities simultaneously.
For $\mmt{3}{3}{3}$ over $\F_2$, no profile survives at $r=20$:
the first-factor conditions alone exclude a decomposition, without
searching for its second and third factors. For $\mmt{2}{3}{3}$ over
$\F_3$, three profiles remain up to symmetry at $r=14$. Comparing overlapping
restrictions with six or seven active terms and flattening rank six
excludes all three.

\paragraph{What is inherited, and what is added.}
The dynamic programming over symmetry orbits of restriction spaces,
using flattening, degenerate reduction, forced products, and substitution
with backtracking, comes from Wang's framework~\cite{wang2026automated}.
The capacity-and-profile strategy and the tight and one-excess
restriction arguments follow D'Ambrosio~\cite{dambrosio2026exact}.

Beyond the two bounds, our additions are the binary common-block
reduction and two computational tools: a rank-one-span search for
stronger restriction bounds and a direct, symmetry-reduced profile
enumerator. The characterization of rank via the slice space underlying
the first tool is classical~\cite[Prop.~14.45]{burgisser1997algebraic};
the contribution is the finite search, including a family-based
enumeration for the larger binary quotients.

\paragraph{Code and certificates.}
The code, certificates, and commands for reproducing the computations
accompany this paper at
\url{https://github.com/wcgbg/matrix-multiplication-n333r21f2}.

\paragraph{Reading the paper.}
The next section explains the binary proof without the implementation
details. Section~\ref{sec:computations} describes its two computations,
Section~\ref{sec:ternary} proves the ternary result, and
Section~\ref{sec:verification} discusses verification and limitations.
The appendices contain the technical details:
Appendix~\ref{sec:profile-completeness} proves the completeness of profile
enumeration, Appendix~\ref{sec:family} establishes family-search coverage
and the safety of its pruning rules, and Appendix~\ref{sec:data} presents
the tables and the outer cases.

\section{Proof overview: \texorpdfstring{$\R_{\F_2}(\mmt{3}{3}{3})\ge21$}{Tensor rank over F2 of <3,3,3> is at least 21}}\label{sec:binary}

\subsection{Restrictions constrain the first factors}

For matrix multiplication let
\[
  A=\F^{n_0\times n_1},\qquad B=\F^{n_1\times n_2},\qquad
  C=\F^{n_0\times n_2}.
\]
The multiplication tensor $T\in A^*\otimes B^*\otimes C$ is
$\sum_{i,j,k}x_{ij}\otimes y_{jk}\otimes e_{ik}$, where $x_{ij}$ and
$y_{jk}$ are coordinate functionals and $e_{ik}$ are output matrix units.
We identify $U\in A^*$ with its coefficient matrix, so
$U(X)=\sum_{i,j}U_{ij}X_{ij}$. In particular, matrix rank of $U$ and
tensor rank of $T$ are different notions.

For a subspace $S\subseteq A$, let $T_S$ be the restriction of the first
input to $S$. Suppose that a table gives lower bounds
\[
  L(S)\le\R_{\F}(T_S)\qquad(S\subseteq A).
\]
For a proposed number of terms $r$ and a subspace $E\subseteq A^*$, put
\[
  c_r(E)=r-L(E^\perp),\qquad
  E^\perp=\{X\in A:U(X)=0\text{ for all }U\in E\}.
\]
We call $c_r(E)$ its capacity. The table itself is always indexed by the
restriction space $S$, not by its annihilator.

\begin{lemma}[capacity inequality, after D'Ambrosio~{\cite[Section~4]{dambrosio2026exact}}]\label{lem:capacity}
  If $T=\sum_{s=1}^r U_s\otimes V_s\otimes W_s$, and $m(E)$ counts the
  indices $s$ with $U_s\in E$, including multiplicities, then
  \begin{equation}\label{eq:capacity}
    m(E)\le c_r(E)=r-L(E^\perp)\qquad(E\subseteq A^*).
  \end{equation}
\end{lemma}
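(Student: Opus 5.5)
The argument is a direct restriction. Fix the decomposition $T=\sum_{s=1}^r U_s\otimes V_s\otimes W_s$ and a subspace $E\subseteq A^*$, and put $S=E^\perp\subseteq A$. We restrict both sides of the decomposition to $S$ in the first input and count how many terms survive; the table entry $L(S)$ then bounds this count from below.

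First, restriction is linear: the restriction map $A^*\to S^*$, $U\mapsto U|_S$, extends to a linear map $A^*\otimes B^*\otimes C\to S^*\otimes B^*\otimes C$ sending $T$ to $T_S$. Applying it to the decomposition gives
\[
  T_S=\sum_{s=1}^r U_s|_S\otimes V_s\otimes W_s .
\]
Second, every index $s$ with $U_s\in E$ has $U_s(X)=0$ for all $X\in E^\perp=S$ by the definition of the annihilator. So $U_s|_S=0$ and the corresponding term vanishes. Deleting these $m(E)$ terms, counted with multiplicity since each index is removed separately, leaves an expression of $T_S$ as a sum of at most $r-m(E)$ rank-one tensors. (Some of the remaining terms may also vanish; this only strengthens the bound.)

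Third, this gives $\R_{\F}(T_S)\le r-m(E)$. Combined with the table hypothesis $L(S)\le\R_{\F}(T_S)$, we get $L(E^\perp)\le r-m(E)$, which rearranges to $m(E)\le r-L(E^\perp)=c_r(E)$, as claimed in \eqref{eq:capacity}.

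No step is a genuine obstacle. The only point needing care is the identification of $T_S$ as the image of $T$ under restriction in the first factor, with $U\in A^*$ viewed as its coefficient matrix acting by $U(X)=\sum U_{ij}X_{ij}$. We only need the inclusion $E\subseteq (E^\perp)^\perp$, which is immediate, so no finite-dimensional duality argument is required. The statement holds over any field and for any table satisfying $L(S)\le\R_{\F}(T_S)$.
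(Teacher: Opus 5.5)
Your proof is correct and is the paper's argument: restricting the first input to $E^\perp$ kills the $m(E)$ terms with $U_s\in E$, so $L(E^\perp)\le\R(T_{E^\perp})\le r-m(E)$. You spell out the linearity of restriction and the annihilator step explicitly, but the route is the same.
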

\begin{proof}
  Restricting the first input to $E^\perp$ kills those $m(E)$ terms. Hence
  $L(E^\perp)\le\R(T_{E^\perp})\le r-m(E)$.
\end{proof}

Following D'Ambrosio's profile formulation~\cite[Section~4]{dambrosio2026exact},
we call a multiset of $r$ projective first factors satisfying
\eqref{eq:capacity} an admissible \emph{profile}. Nonzero scalar
multiples are identified because they vanish on the same subspaces.
Both applications use $r=L(A)$: the table already rules out fewer than
$r$ terms, so a hypothetical $r$-term decomposition has no zero term.
We need only profiles of nonzero first factors. Over $\F_2$, each
projective point has a unique nonzero representative.

\subsection{Two small-subspace facts force a common block}

Take $A=\F_2^{3\times3}$ and $r=20$. The table used below has $L(A)=20$.
The decisive improvement concerns two-dimensional subspaces
$E\subseteq A^*$ with no rank-one matrix. Without rank-one-span searches,
many subspaces had $L(E^\perp)=18$, which does not imply the
common-block reduction below. Rank-one-span searches strengthen bounds
on small restriction spaces; propagation through the existing reduction
and backtracking machinery raises all these values to $19$, enabling
the reduction. The crucial entries of the strengthened table are as
follows. Their full distributions are given in
Appendix~\ref{sec:data}.

\begin{proposition}[entries of the binary table]\label{prop:binary-table}
  For every nonzero $U\in A^*$, $L(\ker U)=19$. For every two-dimensional
  $E\subseteq A^*$,
  \[
    L(E^\perp)=
    \begin{cases}
      19,&E\text{ has no rank-one matrix},\\
      18,&E\text{ has at least one rank-one matrix}.
    \end{cases}
  \]
  The two cases contain $32{,}018$ and $11{,}417$ subspaces, respectively.
\end{proposition}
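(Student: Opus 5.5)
This proposition is a statement about the computed table, so the proof is a certificate-backed computation. It combines lower-bound certificates for each orbit with a counting check on the orbit decomposition. Upper bounds on $\R(T_S)$ are not needed: the proposition only claims that the table contains these values, and that each value is a valid lower bound.

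\textbf{Orbits.} First we reduce by symmetry. The group $\GL_3(\F_2)\times\GL_3(\F_2)$ acts on $A$ by $X\mapsto PXQ^{-1}$, and transposition combined with the cyclic symmetry of $\mmt{3}{3}{3}$ also acts. Under this group the tensor rank of $T_S$ depends only on the orbit of $S$, and matrix rank of elements of $E$ is preserved.

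\textbf{Hyperplanes.} For hyperplanes $\ker U$ the orbits are indexed by $\rank U\in\{1,2,3\}$, giving three cases. For each case we exhibit a framework derivation certifying $\R(T_{\ker U})\ge19$. One route is substitution: removing one first-factor direction from a decomposition of $T$ costs at most one term, so $L(\ker U)\ge L(A)-1$ would follow from $L(A)=20$. However, $L(A)=20$ is itself derived from these entries, so that route is circular. Instead we take the certificates produced by the framework's dynamic program, which uses flattening, degenerate reduction, forced products and backtracking, and replay them.

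\textbf{Two-dimensional subspaces.} Next we enumerate the projective lines of $\PG(8,2)$ in $A^*$. There are $(511\cdot510)/(3\cdot2)=43{,}435$ such lines, which matches $32{,}018+11{,}417$. Each line $E$ has three nonzero points, and we classify $E$ by the multiset of matrix ranks of those points. Counting rank-one matrices confirms the split: there are $49$ rank-one matrices in $\F_2^{3\times3}$, and the lines containing at least one of them can be counted by inclusion--exclusion over pairs of rank-one matrices whose sum is also rank one. This gives $11{,}417$ lines, and checking that number verifies the stated case sizes.

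\textbf{Values for lines.} For lines containing a rank-one matrix, we certify $L(E^\perp)=18$. The lower bound comes from the certificate. The table value is exactly $18$ because the computation does not certify $19$; the proposition reports table entries, not true ranks. For lines without rank-one matrices, we certify $19$. This is the genuinely new part. For each such orbit, the rank-one-span search shows that the slice space of $T_{E^\perp}$ is not contained in any span of $18$ rank-one tensors, using the characterization of \cite[Prop.~14.45]{burgisser1997algebraic}. Where the search is done on smaller restrictions instead, the resulting stronger bounds on those subspaces propagate through the reduction and backtracking rules and raise the entry to $19$.

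\textbf{Main obstacle.} The hard step is this last one, the rank-one-span certification. Its correctness rests on the exhaustiveness of the finite search, including the family-based enumeration and its pruning. We defer that justification to Appendix~\ref{sec:family} and to independent replay of the published certificates.
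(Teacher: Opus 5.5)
Your route matches the paper's. The proposition is a read-off of the certified table, justified by replaying the framework's proof records. Your inclusion--exclusion check of $11{,}417$ is also correct: there are $49\cdot255=12{,}495$ point--line incidences, $98$ lines lying inside the $14$ blocks, and $882$ lines with exactly two rank-one points, so $10{,}437+882+98=11{,}417$. Two statements in the proposal should be corrected. Neither affects the conclusion, because the argument ultimately rests on replaying the certificates.

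First, the entries $19$ for lines without a rank-one matrix are not rank-one-span certificates for $T_{E^\perp}$ itself. For a seven-dimensional restriction, every orientation has $\rho\le9$. Excluding $18$ terms would therefore need $e=18-\rho\ge9$. That lies outside the range $e\le5$ covered by the family engine of Appendix~\ref{sec:family}, and it is hopeless for the exhaustive engine. In the paper, the rank-one-span searches are run on small restriction spaces; the hard instances have $\rho=9$ and $\dim Q=18$. The $19$'s then arise by propagation through degenerate reduction and backtracking. So your hedging sentence describes the only route that actually occurs, and the sentence claiming a direct search for each orbit should be dropped.

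Second, your discarded substitution remark has the inequality backwards. Restricting to the kernel of a first factor $U_1$ of an $r$-term decomposition kills that term. This gives $\R(T_{\ker U_1})\le r-1$, hence $L(A)\ge1+\min_U L(\ker U)$, which is the direction relating these entries to $L(A)=20$. Nothing yields $\R(T_{\ker U})\ge\R(T)-1$; a rank-one complement of $\ker U$ only gives $\R(T)\le\R(T_{\ker U})+3$. That route therefore fails outright, not merely because it is circular.
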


Thus every one-dimensional $E$ has capacity one, and every
two-dimensional $E$ has capacity at most two. Any admissible profile is
therefore a set of $20$ distinct nonzero matrices, with no three linearly
dependent. More importantly, if two of its matrices $U,U'$ have rank at
least two, their sum must have rank one. Otherwise
$E=\{0,U,U',U+U'\}$ would have capacity one but contain two selected
matrices.

There are $49$ rank-one matrices $xc^{\T}$, with $x,c\in\F_2^3\setminus0$.
The following $14$ three-dimensional subspaces will be called rank-one
blocks:
\[
  H_x=\{xc^{\T}:c\in\F_2^3\},\qquad
  H^c=\{xc^{\T}:x\in\F_2^3\}.
\]
Every nonzero element of a block has rank one.

\begin{lemma}[rank-one sets]\label{lem:rank-one}
  A set of rank-one matrices over $\F_2$ whose pairwise sums have rank one
  is contained in a rank-one block.
\end{lemma}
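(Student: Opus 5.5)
We first work out when the sum of two distinct rank-one matrices $xc^{\T}$ and $yd^{\T}$ over $\F_2$ again has rank one. Here $x,y,c,d\in\F_2^3\setminus0$. If $x=y$, the sum is $x(c+d)^{\T}$, which is rank one because $c\ne d$; the case $c=d$ is symmetric. Now suppose $x\ne y$ and $c\ne d$. Since distinct nonzero vectors over $\F_2$ are linearly independent, $x,y$ are independent, and so are $c,d$. The sum $xc^{\T}+yd^{\T}$ then maps $\spn(c,d)^\perp$-complementary inputs onto $\spn(x,y)$, so its image is two-dimensional and its rank is two. Hence two distinct rank-one matrices have a rank-one sum exactly when they share the column factor $x$ or the row factor $c$, that is, when they lie in a common $H_x$ or a common $H^c$.

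Next we take a set $\mathcal S$ of rank-one matrices in which every pair has a rank-one sum. If $|\mathcal S|\le1$, the claim is trivial, since every rank-one matrix lies in some block. If $|\mathcal S|\ge2$, pick two elements $xc^{\T}\ne yd^{\T}$. By the first step they share a factor; without loss of generality $x=y$ and $c\ne d$, the other case being the transpose. We claim $\mathcal S\subseteq H_x$. Suppose some $ze^{\T}\in\mathcal S$ had $z\ne x$. Then it must share its row factor with both $xc^{\T}$ and $xd^{\T}$, forcing $e=c$ and $e=d$, which contradicts $c\ne d$. Hence every element of $\mathcal S$ has column factor $x$ and lies in $H_x$.

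The main obstacle is the rank computation in the first step. Over $\F_2$, independence of two distinct nonzero vectors is automatic, and the formula $\rank(xc^{\T}+yd^{\T})=2$ for independent pairs is standard. The only place needing care is the case split that uses $c\ne d$. "Pairwise sums" is meant for distinct elements, because over $\F_2$ we have $U+U=0$.
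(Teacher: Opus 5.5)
Your proof is correct and is essentially the paper's argument. You first show that two distinct rank-one matrices have a rank-one sum only when they share a left or right factor, and then observe that a third element cannot match the differing factor of both members of a pair. The paper splits on whether all left vectors agree, while you start from an arbitrary pair and use transposition; this is the same reasoning. Your one-line justification of the rank-two case is vaguely worded, but it is easily made precise by choosing inputs $u,v$ with $c^{\T}u=d^{\T}v=1$ and $c^{\T}v=d^{\T}u=0$, whose images under $xc^{\T}+yd^{\T}$ are $x$ and $y$.
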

\begin{proof}
  Two distinct products $xc^{\T}$ and $yd^{\T}$ can have a rank-one sum only
  if $x=y$ or $c=d$: otherwise both pairs of vectors are independent and
  the sum has rank two. If all left vectors agree, the conclusion follows.
  Otherwise choose two elements with different left vectors and therefore
  the same right vector $c$. Any further element must share a factor with
  each of them. It cannot share both different left vectors, so its right
  vector is $c$.
\end{proof}

\begin{lemma}[common-block reduction]\label{lem:block}
  Let $W$ be the matrices of rank at least two in an admissible $20$-profile.
  Then $W\subseteq U_0+H$ for some matrix $U_0$ and rank-one block $H$.
  In particular $|W|\le8$, and the remaining $20-|W|\ge12$ matrices
  are distinct rank-one matrices.
\end{lemma}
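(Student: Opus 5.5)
Fix an element $U_0\in W$; if $W$ is empty there is nothing to prove. By the discussion after Proposition~\ref{prop:binary-table}, for any two distinct $U,U'\in W$ the sum $U+U'$ has rank one. I will translate $W$ by $U_0$: the set $D=\{U+U_0:U\in W,\ U\ne U_0\}$ consists of rank-one matrices. I then want to apply Lemma~\ref{lem:rank-one} to $D$, which needs the pairwise sums within $D$ to have rank one. For distinct $U,U'\in W\setminus\{U_0\}$ we have $(U+U_0)+(U'+U_0)=U+U'$ over $\F_2$. This has rank one, again by the two-dimensional capacity fact applied to the pair $U,U'$ of rank-at-least-two matrices. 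So $D$ is a set of rank-one matrices with pairwise rank-one sums, and Lemma~\ref{lem:rank-one} places $D$ inside some rank-one block $H$.

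From $D\subseteq H$ we get $W\subseteq U_0+(D\cup\{0\})\subseteq U_0+H$, which is the claimed coset. Since $H$ is three-dimensional over $\F_2$, the coset has $8$ elements, so $|W|\le8$.

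For the last assertion, recall that an admissible $20$-profile consists of $20$ distinct nonzero matrices. This holds because every one-dimensional $E$ has capacity one, so repeated factors are excluded. The matrices outside $W$ are nonzero and have matrix rank less than two, so they have rank exactly one. They are distinct, and there are $20-|W|\ge12$ of them.

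The only delicate point is the edge cases of Lemma~\ref{lem:rank-one}. If $|D|\le1$, containment in a block is trivial, since every rank-one matrix $xc^{\T}$ lies in $H_x$. I also need to check that the lemma's proof does not tacitly assume at least two elements. It does not: a single element or the empty set lies in some block trivially. The rest is immediate, because characteristic two makes the translation preserve pairwise sums.
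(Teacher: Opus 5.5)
Your proof is correct and follows the paper's route. Like the paper, you fix $U_0\in W$, use the two-dimensional capacity fact to make both the differences $U+U_0$ and their pairwise sums $U+U'$ rank one, and then apply Lemma~\ref{lem:rank-one} to place all the differences in one block. Your handling of the degenerate cases $|D|\le1$ and of the distinctness of the remaining rank-one matrices fills in what the paper leaves implicit.
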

\begin{proof}
  The assertion is immediate if $|W|\le1$. Otherwise fix $U_0\in W$.
  Each $U+U_0$, for $U\in W\setminus\{U_0\}$, has rank one by
  Proposition~\ref{prop:binary-table}. Any two of these differences have a
  rank-one sum, since their sum is the difference of two elements of $W$.
  Lemma~\ref{lem:rank-one} places all the differences in one block $H$.
\end{proof}

\subsection{Thirty-five cases and no completion}

The first-factor action of the matrix multiplication symmetries is
\[
  G=\{U\mapsto PUQ,\ U\mapsto PU^{\T}Q:
  P,Q\in\GL_3(\F_2)\},\qquad |G|=56{,}448.
\]
Sandwiching extends to a symmetry of the multiplication tensor. In the
coefficient conventions above, transposition extends as
$(U,V,W)\mapsto(U^{\T},W,V)$, interchanging the second and third factors.
Thus $G$ preserves the restriction ranks and the table, and is transitive
on the $14$ blocks.

We may fix $H=H_{e_1}$, the matrices supported on the first row. Its
stabilizer has order $4{,}032$. The other two rows index the $63$ nonzero
cosets of $H$. Enumerate the subsets of rank-at-least-two matrices in
these cosets, and identify them under $\Stab_G(H)$. Include the empty
set once. The resulting $35$ possibilities for $W$ have the distribution
\begin{center}
  \begin{tabular}{c*{9}{r}}
    \toprule
    $|W|$ & 0&1&2&3&4&5&6&7&8\\
    number of cases &1&3&5&6&8&5&4&2&1\\
    \bottomrule
  \end{tabular}
\end{center}
For each case, the task is to choose $20-|W|$ of the $49$ rank-one
matrices so that every capacity holds. The residual symmetry is the
full setwise stabilizer $\Stab_G(W)$.

Sixteen cases fail before this search starts. In each, four elements of
$W$ form an affine plane $U_0+D$, where $D$ is a two-dimensional
rank-one subspace. Their span $E$ has dimension three, and the table
gives $L(E^\perp)=17$, hence capacity three. All $9{,}114$ subspaces of
this form have that table value. The other nineteen cases require an
enumeration of their rank-one completions.

\begin{proposition}[binary enumeration]\label{prop:binary-enum}
  None of the $35$ cases has an admissible completion to a $20$-profile.
  The nineteen nontrivial searches exhaust $50{,}830$ jobs and
  $15{,}658{,}474$ nodes and return no profile.
\end{proposition}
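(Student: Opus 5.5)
This is a computational statement, so the plan is to describe an exhaustive, symmetry-reduced search and argue that it is complete and that its pruning is sound. The sixteen cases containing an affine plane $U_0+D$ are discharged as in the text: the span $E$ of those four matrices is three-dimensional, the table gives $L(E^\perp)=17$, so $c_{20}(E)=3<4$. For each of the remaining nineteen cases $W$, I would search over subsets $R$ of the $49$ rank-one matrices with $|R|=20-|W|$. I would restrict to rank-one matrices not already excluded by small capacities. By Proposition~\ref{prop:binary-table}, an element $u\in R$ may not lie in the span of any two elements of $W\cup R$ whose span has no rank-one matrix, and $E=\spn\{U,U'\}$ carries at most two selected elements. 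Any $u$ with $u\in\spn\{w,w'\}$ for $w,w'\in W$ is discarded in advance, which shrinks the candidate pool.

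The search itself is a depth-first enumeration. Candidates are added in a fixed order, and after each addition every subspace $E$ spanned by a small subset of the chosen matrices is checked against $c_{20}(E)=20-L(E^\perp)$. In practice I would maintain counts $m(E)$ incrementally for all $E$ of dimension up to some bound, since checking every subspace of $A^*$ is too costly, and verify all capacities only at full profiles. To make the tree feasible, I would split the top levels into independent jobs and quotient by $\Stab_G(W)$. The natural way to do this is to choose the first few rank-one matrices only up to the stabilizer, using canonical orbit representatives, so that each orbit of partial profiles is explored once. This is where the reported $50{,}830$ jobs come from.

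Completeness of this reduction needs an argument, and I would defer it to Appendix~\ref{sec:profile-completeness}. The argument is the standard one for isomorph rejection. Every admissible completion $R$ has an image under $\Stab_G(W)$ whose lexicographically least prefix is a job representative, and capacities are $G$-invariant because $G$ preserves the table. Consequently, pruning a partial profile whose $m(E)$ already exceeds $c_{20}(E)$ is sound, since $m$ is monotone under adding elements.

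I expect the main obstacle to be cost together with pruning strength rather than any conceptual step. Capacities of subspaces of dimension three or four containing many rank-one matrices are only moderately tight, and at least $12$ rank-one elements must be chosen. To handle this, I would first exploit that rank-one matrices cluster in the $14$ blocks. Each block $H$ is three-dimensional, so the relevant capacity is $c_{20}(H)=20-L(H^\perp)$, and two-dimensional rank-one subspaces have their own capacities. Beyond that, I would add counting bounds as lookahead: if the remaining candidates cannot reach $20-|W|$ without violating the capacity of some block, prune. The final verification is that every job terminates with no profile, giving the stated node count, and the proposition then follows from Lemma~\ref{lem:block} together with this exhaustion.
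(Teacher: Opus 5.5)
Your proposal follows essentially the same route as the paper. You reject the sixteen affine-plane cases via $L(E^\perp)=17$, then run an exhaustive, job-split depth-first search over rank-one completions with capacity pruning on spans of selected points and symmetry reduction under $\Stab_G(W)$, deferring completeness to the arguments of Appendix~\ref{sec:profile-completeness}. The differences are only in implementation detail: the paper applies lexicographic canonicity at every node, justified by Lemma~\ref{lem:prefix}, which is what produces the stated node count, and it uses monotonicity of the table (Lemma~\ref{lem:pruning}) so that spans of selected points detect every violation, checked through dimension six while branching and in dimensions seven and eight at leaves.
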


This is a computational proposition. The algorithm is described in
Section~\ref{sec:computations}; its completeness is proved in
Appendix~\ref{sec:profile-completeness}. Appendix~\ref{sec:data} lists the
individual cases with their recorded job and node counts. The run uses
only the capacities and the common-block reduction, with no additional
inequalities coupling different factor lists. In particular it includes
$W=\varnothing$.

\begin{proof}[Proof of Theorem~\ref{thm:binary}]
  The table excludes rank below $20$. A $20$-term decomposition would give
  an admissible profile by Lemma~\ref{lem:capacity}. By
  Lemma~\ref{lem:block}, a symmetry takes its higher-rank part to one of
  the $35$ cases. Proposition~\ref{prop:binary-enum} excludes every
  rank-one completion of every case. Hence no such decomposition exists.
\end{proof}

The enumeration uses the entire strengthened table, not just the entries stated in
Proposition~\ref{prop:binary-table}.

\section{The two computations supporting the proof}\label{sec:computations}

\subsection{Producing the restriction bounds}

We use the framework of Wang~\cite{wang2026automated} to enumerate
restriction spaces $S\subseteq A$ up to symmetry and assign a lower
bound $L(S)$ to each orbit. It processes spaces in increasing dimension,
so bounds on smaller restrictions are available for subsequent steps.
The four inherited techniques are flattening, degenerate reduction,
forced products, and substitution with backtracking. Each successful
step records its technique and the data needed to check it. The verifier
recomputes elementary bounds, checks reduction witnesses, and replays
backtracking traces. We do not repeat these algorithms here.

The additional search tests whether the slice space can lie in a
rank-one-spanned space of a prescribed dimension. It strengthens
individual restrictions; the existing techniques then propagate these
improvements. The final binary table contains $496$ orbits, expanding to
$8{,}283{,}458$ subspaces. The ternary table contains $31$ orbits and
$56{,}632$ subspaces. The latter uses the same finite-field framework
with the $\F_3$ matrix multiplication tensor and its symmetry action.

\subsection{Searching for a rank-one-spanned extension}\label{sec:slice-search}

For this subsection, write a tensor in any chosen orientation as
$T\in D\otimes B'\otimes C'$. Let $V\subseteq B'\otimes C'$ be the
image of contraction by $D^*$, and let $\rho=\dim V$. Let $B_T,C_T$
be the minimal factor spaces supporting $V$, which we call the
\emph{core factor spaces}, and put
$K=B_T\otimes C_T$. The following classical characterization is the
basis of the search~\cite[Prop.~14.45]{burgisser1997algebraic}.

\begin{proposition}[rank via the slice space]\label{prop:slices}
  The rank of $T$ is the least dimension of a subspace $U$ spanned by
  rank-one matrices and satisfying $V\subseteq U\subseteq K$.
\end{proposition}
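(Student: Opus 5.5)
We prove the two inequalities separately. Throughout, identify $B'\otimes C'$ with the space of $\dim B'\times\dim C'$ matrices, so that rank-one tensors $b\otimes c$ are exactly the matrices of rank one. We use the contraction map $\phi\colon D^*\to B'\otimes C'$, $\phi(\delta)=(\delta\otimes\mathrm{id}\otimes\mathrm{id})(T)$, whose image is $V$.

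\emph{Upper bound on $\R(T)$.} Suppose $V\subseteq U\subseteq K$ with $U=\spn\{b_1\otimes c_1,\dots,b_k\otimes c_k\}$ and the $b_s\otimes c_s$ rank one. We may assume these $k$ tensors form a basis of $U$, so $k=\dim U$. Choose a basis $d_1,\dots,d_m$ of $D$ and write $T=\sum_j d_j\otimes M_j$ with $M_j\in B'\otimes C'$. Each $M_j$ is the contraction of $T$ by the dual basis vector $d_j^*$, so $M_j\in V\subseteq U$. Expand $M_j=\sum_s \lambda_{js}\, b_s\otimes c_s$. Regrouping gives
\[
  T=\sum_{s=1}^k\Bigl(\sum_j\lambda_{js}d_j\Bigr)\otimes b_s\otimes c_s,
\]
which is a decomposition with $k=\dim U$ terms. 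Hence $\R(T)$ is at most the minimum of $\dim U$ over all admissible $U$.

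\emph{Lower bound on $\R(T)$.} Take an optimal decomposition $T=\sum_{s=1}^r a_s\otimes b_s\otimes c_s$ with $r=\R(T)$; minimality forces every term to be nonzero. Set $U=\spn\{b_s\otimes c_s\}$. Then $\dim U\le r$, and $U$ is spanned by rank-one tensors. Every contraction satisfies $\phi(\delta)=\sum_s\delta(a_s)\,b_s\otimes c_s\in U$, so $V\subseteq U$.

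The remaining point, which is the only nonroutine step, is to show $U\subseteq K=B_T\otimes C_T$, i.e.\ that $b_s\in B_T$ and $c_s\in C_T$ for every $s$. We argue by projection.
\begin{enumerate}
  \item Choose a projection $\pi\colon B'\to B_T$ that is the identity on $B_T$.
  \item Since $V\subseteq B_T\otimes C'$, the map $\mathrm{id}\otimes\pi\otimes\mathrm{id}$ fixes $T$. Here we use that $T$ lies in $D\otimes V$, a consequence of the definition of $V$ as the contraction image.
  \item Applying $\mathrm{id}\otimes\pi\otimes\mathrm{id}$ to the decomposition therefore gives another $r$-term decomposition of $T$, with $b_s$ replaced by $\pi(b_s)\in B_T$.
  \item Doing the same on the third factor with a projection $C'\to C_T$ yields an $r$-term decomposition all of whose factor pairs lie in $B_T\times C_T$.
\end{enumerate}
Now redefine $U$ from this projected decomposition. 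It is still spanned by rank-one tensors, still contains $V$ by the contraction argument above, and now lies in $K$. Projected terms that vanish only make $\dim U$ smaller, which does no harm. Therefore the minimum of $\dim U$ over admissible $U$ is at most $r=\R(T)$.

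Combining the two directions gives equality. The argument works over any field, including $\F_2$ and $\F_3$. It needs no assumption that the factor spaces are concise, because the projections handle the case where they are not.
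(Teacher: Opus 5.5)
Your proof is correct and follows the paper's argument: a rank-one basis of $U$ yields a decomposition with $\dim U$ terms, and conversely the span of the second--third products of an optimal decomposition contains $V$ and can be moved into $K$ by projections onto $B_T$ and $C_T$. The only cosmetic difference is where the projection is applied. You project the decomposition itself, using $T\in D\otimes V$ to see that the projection fixes $T$; the paper projects the span $U$ directly, noting that the projections fix $V$, keep rank at most one, and do not increase dimension.
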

\begin{proof}
  A tensor decomposition places all slices in the span of its second-third
  products. Conversely, a rank-one basis of a space containing $V$ expresses
  every slice and gives a decomposition with that many terms. It suffices
  to work inside $K$: projections onto $B_T$ and $C_T$ fix $V$, preserve
  matrix rank at most one, and cannot increase the dimension of the span.
\end{proof}

Set $Q=K/V$ and let $\pi:K\to Q$. A space containing $V$ is uniquely
$\pi^{-1}(W)$ for a subspace $W\subseteq Q$, with
$\dim\pi^{-1}(W)=\rho+\dim W$. Thus, for a target rank $r\ge\rho$, we
need only test dimensions $0\le k\le e:=r-\rho$. For a given $W$ the
test is exact:
\begin{equation}\label{eq:span-test}
  \dim\spn\{M\in K:\rank M=1,\ \pi(M)\in W\}=\rho+k.
\end{equation}
If no $W$ of these dimensions passes, $\R(T)\ge r+1$.

The exhaustive implementation first groups the rank-one matrices by
their quotient keys $\pi(M)$, using projective keys over odd primes.
It enumerates every $W$ in reduced row echelon form. A count of the
available matrices rejects many candidates before Gaussian elimination
tests \eqref{eq:span-test}. The span of the key-zero matrices is inserted
once at the start. If $q=\dim Q$, the number of candidates is
$\sum_{k=0}^e \genfrac{[}{]}{0pt}{}{q}{k}_{|\F|}$, a sum of Gaussian binomial
coefficients. Over $\F_2$ this is $831{,}982$ for $(q,e)=(9,3)$, but
about $2.3\cdot10^{17}$ for $(q,e)=(18,4)$.

For the larger binary quotients a second implementation exploits the
product structure of rank-one matrices. Fixing a nonzero vector in the
smaller core factor ($B_T$ or $C_T$) gives a \emph{family}; its quotient keys form a
linear subspace of $Q$. Any successful $W$ must meet these family images
in enough dimensions to span $\pi^{-1}(W)$ by rank-one matrices. For
$e\le5$, under explicit numerical conditions, these intersections give
a smaller covering list of candidates. Appendix~\ref{sec:family} proves
coverage and the safety of each filter. Neither implementation treats
an exhausted budget or an unsupported case as an exclusion.

\subsection{Enumerating profiles}\label{sec:profile-algorithm}

The second computation searches directly over projective first factors.
The input is the expanded table $L$, a target $r$, a candidate set, and
a group preserving the table and candidate set. Both tables are
monotone: $S'\subseteq S$ implies $L(S')\le L(S)$.
In both applications every candidate has point capacity at most one,
so the profiles are sets. In the binary case the higher-rank part $W$
is fixed first; in the ternary case the point capacities themselves
restrict all candidates to rank-one matrices.

The search adds candidates in a fixed increasing order. For the
selected set it maintains the distinct spans of subsets, their reduced
row echelon bases, and their occupancy counts. A violated capacity
rejects the branch. A saturated span prevents the addition of further
points inside it, and the search also stops if too few available points
remain. Lexicographic canonicity under the relevant group removes
symmetry-equivalent branches.
\begin{quote}
  \begin{minipage}{\linewidth}\small
    Search(selected, future):\\
    \hspace*{1em}reject a violated capacity or a noncanonical selected set;\\
    \hspace*{1em}if $r$ points are selected, check all remaining capacities and report;\\
    \hspace*{1em}discard future points in saturated spans;\\
    \hspace*{1em}reject if too few future points remain;\\
    \hspace*{1em}for each next point in increasing order, add it and recurse.
  \end{minipage}
\end{quote}
This is a description of the search logic, not a substitute for its
completeness proof. Appendix~\ref{sec:profile-completeness} explains why
spans of selected points detect every violation and why canonical
prefixes suffice. Splitting the search at a fixed depth produces
independent jobs without changing the search space.

An enumeration establishes nonexistence or a complete classification
only if every job finishes without a node-budget or solution-limit
stop. A reported feasible profile is not evidence that its job was
exhausted. The runs used here finish well below both limits; the
numerical checks are recorded in Appendix~\ref{sec:data}.

\section{\texorpdfstring{$\R_{\F_3}(\mmt{2}{3}{3})\ge15$}{Tensor rank over F3 of <2,3,3> is at least 15}}\label{sec:ternary}

We now take $A=\F_3^{2\times3}$, $B=\F_3^{3\times3}$,
$C=\F_3^{2\times3}$, and $r=14$. The table has $L(A)=14$ and
\begin{equation}\label{eq:ternary-points}
  L(\ker U)=
  \begin{cases}13,&\rank U=1,\\14,&\rank U=2.
  \end{cases}
\end{equation}
Consequently every admissible $14$-profile consists of distinct
rank-one points. They form
$\PG(1,3)\times\PG(2,3)$, a set of $4\cdot13=52$ points.
The full table is given in Appendix~\ref{sec:data}.

\subsection{The three surviving profiles}

The generic enumerator, using sandwich symmetries from
$\GL_2(\F_3)\times\GL_3(\F_3)$, leaves exactly three orbits of
profiles. Its projective permutation group has order $134{,}784$.
The computation takes $3$ jobs and $7{,}024$ nodes.

Here are representatives that make the exclusion transparent. Write
\[
  e_1=(1,0)^{\T},\quad e_2=(0,1)^{\T},\quad
  f=(1,1)^{\T},\quad g=(1,2)^{\T},
\]
and take the column directions
$c_1=(1,0,0)^{\T}$, $c_2=(0,1,0)^{\T}$,
$c_3=(0,0,1)^{\T}$, $c_4=(1,1,1)^{\T}$.
In the array below, a bullet denotes a point common to all three
profiles, and $P_i$ denotes the single extra point of profile $P_i$.
No points with other column directions occur.
\begin{center}
  \begin{tabular}{c c c c c}
    \toprule
    $xc^{\T}$ & $c_1$ & $c_2$ & $c_3$ & $c_4$\\
    \midrule
    $e_1$ & $\bullet$ & $\bullet$ & $\bullet$ & $\bullet$\\
    $e_2$ & $\bullet$ & $\bullet$ & $\bullet$ & $\bullet$\\
    $f$   & $\bullet$ & $\bullet$ & $\bullet$ & $P_1$\\
    $g$   & $\bullet$ & $\bullet$ & $P_2$ & $P_3$\\
    \bottomrule
  \end{tabular}
\end{center}
Each profile therefore has thirteen common points and one extra point.
The completeness of this computational classification uses the same
argument as the binary enumeration; the unrestricted symmetry check is
described in Section~\ref{sec:verification}, and the recorded counts are
given in Appendix~\ref{sec:data}.

\subsection{Tight and one-excess restrictions}

For $z\in\F_3^3\setminus0$, restrict the first input to
\[
  S_z=\{vz^{\T}:v\in\F_3^2\}.
\]
The restricted multiplication is $(v,Y)\mapsto v(z^{\T}Y)$, an
outer-product tensor with output support all of $C$, of dimension six.
Its second-factor support is
\[
  H_z=\{zq^{\T}:q\in\F_3^3\}\subseteq B^*.
\]
A term with first factor $x_sc_s^{\T}$ is active exactly when
$c_s^{\T}z\ne0$, and its restricted first factor is
$a_s=(c_s^{\T}z)x_s$. If $z,z'$ are not proportional, then
$H_z\cap H_{z'}=0$.

The next lemma adapts D'Ambrosio's tight and one-excess arguments to
the column restrictions above. Here the field is $\F_3$, and the
second and third factors exchange roles relative to his proof.

\begin{lemma}[tight and one-excess restrictions, after D'Ambrosio~{\cite[Section~3]{dambrosio2026exact}}]\label{lem:restriction}
  Consider the active terms of a decomposition restricted to $S_z$.
  \begin{enumerate}
    \item If there are six active terms, their output factors $W_s$ form a
      basis of $C$, every $V_s$ lies in $H_z$, and the dual output basis has
      the form $\omega_s=p_sq_s^{\T}$, with $p_s$ proportional to $x_s$ and
      $V_s$ proportional to $zq_s^{\T}$.
    \item If there are seven active terms, the images of their second
      factors in $B^*/H_z$ span a space of dimension at most one.
  \end{enumerate}
\end{lemma}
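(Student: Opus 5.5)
Restricting the decomposition to $S_z$ kills the inactive terms and leaves
\[
  T_{S_z}=\sum_{s\text{ active}} a_s\otimes V_s\otimes W_s ,
\]
where $a_s=(c_s^{\T}z)x_s$ is a linear form on $\F_3^2\cong S_z$. The plan is to contract this identity with functionals on $C$. I work throughout with the tensor $T_{S_z}=\sum_{i,j,k} v_i\otimes(zq\text{-coordinates})\otimes e_{ik}$. Concretely, $T_{S_z}(v,Y)=v\,(z^{\T}Y)$, a $2\times3$ outer product. Its $C$-slices are the six functionals $(v,Y)\mapsto v_i\,(z^{\T}Y)_k$, which are linearly independent. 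So the output flattening has rank $6$, and the $W_s$ must span $C$.

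\textbf{Six active terms.} The six $W_s$ span the six-dimensional $C$, so they form a basis. Let $\omega_s$ be the dual basis of $C^*$. Contracting with $\omega_s$ isolates a single term:
\[
  a_s\otimes V_s=\omega_s\circ T_{S_z}.
\]
Write $\omega_s(M)=p_s^{\T}Mq_s'$ in general, and pair with the coefficient matrix of $\omega_s$. The left side is a rank-one bilinear form in $(v,Y)$. The right side is $(v,Y)\mapsto\sum_{i,k}(\omega_s)_{ik}v_i(z^{\T}Y)_k$. This is the bilinear form on $\F_3^2\times\F_3^3$, composed with $Y\mapsto z^{\T}Y$, whose matrix is the $2\times3$ coefficient matrix $\Omega_s$ of $\omega_s$. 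It has rank one exactly when $\Omega_s=p_sq_s^{\T}$. In that case $\omega_s\circ T_{S_z}=p_s\otimes(Y\mapsto z^{\T}Yq_s)$, and comparison gives $a_s\propto p_s$, hence $x_s\propto p_s$, and $V_s\propto zq_s^{\T}\in H_z$.

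\textbf{Seven active terms.} Now the seven $W_s$ span $C$ with exactly one linear relation $\sum_s\lambda_sW_s=0$. Choose six of them forming a basis and take the dual basis $\omega_t$. Contracting with $\omega_t$ gives
\[
  \omega_t\circ T_{S_z}=a_t\otimes V_t+\mu_t\,a_7\otimes V_7 ,
\]
where $\mu_t=\omega_t(W_7)$. By the same computation as in the six-term case, the left side lies in $\F_3^2\otimes H_z$, so it has the form $(v,Y)\mapsto v^{\T}\Omega_tz^{\T}Yq$-type with second factors in $H_z$. Reduce modulo $\F_3^2\otimes H_z$. This gives
\[
  a_t\otimes\bar V_t=-\mu_t\,a_7\otimes\bar V_7 \quad\text{in } \F_3^2\otimes(B^*/H_z).
\]
The $a_s$ are nonzero because the terms are active. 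So either $\bar V_t=0$, or $\bar V_t\propto\bar V_7$. Hence every $\bar V_s$ lies in $\spn\{\bar V_7\}$, a space of dimension at most one. This argument uses that $W_7$ lies in the span of the other six, and by symmetry of the relation any choice works.

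\textbf{Main obstacle.} The delicate step is the bookkeeping in identifying $\omega\circ T_{S_z}$ with the coefficient matrix $\Omega$ and showing that its second factor lies in $H_z$ under the paper's identification of $B^*$ with matrices. I would check directly that $Y\mapsto z^{\T}Yq$ has coefficient matrix $zq^{\T}$, which is what puts it in $H_z$.
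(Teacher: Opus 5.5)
Your proof is correct and follows the paper's argument: part 1 is the same as the paper's (rank-six output flattening, contraction with the dual basis, and reading off the rank of the coefficient matrix). In part 2, contracting with the dual vectors of six independent $W_t$ is just the coordinate form of the paper's step $a_s\otimes\pi_z(V_s)=\kappa_sZ$, with $Z$ proportional to $a_7\otimes\bar V_7$; you should drop the stray phrase ``write $\omega_s(M)=p_s^{\T}Mq_s'$ in general,'' since a general $\omega_s$ has a rank-two coefficient matrix and rank one is what you then derive.
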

\begin{proof}
  The output flattening of $T_{S_z}$ has rank six, so the active $W_s$
  span $C$. For six terms they are a basis. Contracting with its dual
  vector $\omega_s\in C^*$ isolates $a_s\otimes V_s$. If $M_s$ is the
  $2\times3$ coefficient matrix of $\omega_s$, contracting the
  outer-product tensor gives the bilinear form
  $v^{\T}M_s(z^{\T}Y)^{\T}$. Its expression as the nonzero simple
  tensor $a_s\otimes V_s$ forces $M_s=p_sq_s^{\T}$ with $p_s$ proportional
  to $a_s$, and $V_s$ proportional to $zq_s^{\T}$.

  For seven terms there is precisely one independent relation among the
  $W_s$, say $\sum_s\kappa_sW_s=0$. Let $\pi_z:B^*\to B^*/H_z$.
  Projecting the restricted tensor gives
  $\sum_s a_s\otimes\pi_z(V_s)\otimes W_s=0$. Since the relation space
  of the $W_s$ is one-dimensional, there is a tensor $Z$ such that
  \[
    a_s\otimes\pi_z(V_s)=\kappa_s Z\qquad\text{for every active }s.
  \]
  If $Z=0$, every image vanishes. Otherwise a nonzero $\kappa_s$ shows
  that $Z$ is simple, and all nonzero $\pi_z(V_s)$ are proportional to
  its second factor. This proves the second assertion.
\end{proof}

The two overlap arguments below adapt D'Ambrosio's profile
exclusions~\cite[Section~5]{dambrosio2026exact} to these column
restrictions.

\begin{corollary}[overlapping restrictions]\label{cor:overlap}
  Let $z,z'\in\F_3^3$ be nonproportional.
  \begin{enumerate}
    \item Two six-term restrictions, to $S_z$ and to $S_{z'}$, share no
      term.
    \item If a six-term restriction to $S_z$ and a seven-term restriction
      to $S_{z'}$ share terms, the second factors of the shared terms are
      proportional.
  \end{enumerate}
\end{corollary}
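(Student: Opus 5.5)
Both parts follow from Lemma~\ref{lem:restriction} together with the fact, noted above, that $H_z\cap H_{z'}=0$ for nonproportional $z,z'$. Throughout, the \emph{same} decomposition is restricted to two different spaces, so a shared term $s$ carries one and the same second factor $V_s$ in both restrictions. Moreover $V_s\neq0$ for every active term: otherwise the term contributes nothing to the restricted tensor, and six terms would give rank below the output flattening rank six.

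For part~1, let $s$ be active in both six-term restrictions. Lemma~\ref{lem:restriction}(1) applied to $S_z$ puts $V_s\in H_z$, and applied to $S_{z'}$ puts $V_s\in H_{z'}$. Hence $V_s\in H_z\cap H_{z'}=0$, contradicting $V_s\ne0$. So no term is shared.

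For part~2, every shared term has $V_s\in H_z\setminus0$ by the six-term part of the lemma. By the seven-term part applied to $S_{z'}$, the images $\pi_{z'}(V_s)$ of all seven active second factors lie in a line of $B^*/H_{z'}$. Since $H_z\cap H_{z'}=0$, the map $\pi_{z'}$ is injective on $H_z$. The shared $V_s$ therefore have nonzero images, and those images are proportional. Injectivity on $H_z$ then lifts proportionality of the images to proportionality of the $V_s$ themselves, which proves the claim.

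The only delicate point is making sure the lemma's conclusions refer to the genuine second factors of the full decomposition rather than to restricted data. This holds because restriction changes only the first factors, $U_s\mapsto a_s$, and leaves the second factors $V_s$ untouched. With that checked, no step is a real obstacle; the nonvanishing of $V_s$ is the one point needing a sentence.
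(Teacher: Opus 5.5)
Your proof is correct and follows the paper's argument: part 1 uses $V_s\in H_z\cap H_{z'}=0$, and part 2 uses injectivity of $B^*\to B^*/H_{z'}$ on $H_z$ together with the at-most-one-dimensional image bound of Lemma~\ref{lem:restriction}. You also justify explicitly that active terms have $V_s\neq0$: otherwise a six-term restriction would have rank at most five, below its output flattening rank. The paper leaves this point implicit in the lemma.
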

\begin{proof}
  A term shared by two six-term restrictions has its nonzero $V_s$ in
  $H_z\cap H_{z'}=0$ by the first part of Lemma~\ref{lem:restriction}.
  In the second case the shared $V_s$ lie in $H_z$, where the quotient map
  to $B^*/H_{z'}$ is injective, and by the second part of
  Lemma~\ref{lem:restriction} their quotient images span at most one
  dimension.
\end{proof}

\subsection{Excluding the profiles}

For $P_2$, the restrictions with $z=(1,0,0)^{\T}$,
$(0,1,0)^{\T}$, and $(0,0,1)^{\T}$ each have six active terms.
The first two share the terms $(e_1,c_4)$ and $(e_2,c_4)$, contradicting
the first part of Corollary~\ref{cor:overlap}.

For both $P_1$ and $P_3$, take
\[
  z=(0,0,1)^{\T},\qquad z'=(0,1,2)^{\T}.
\]
The first restriction has six active terms, those in columns $c_3,c_4$.
The second has seven, those in columns $c_2,c_3$, because
$c_1^{\T}z'=c_4^{\T}z'=0$. Their common terms are
$(e_1,c_3),(e_2,c_3),(f,c_3)$, whose second factors are proportional by
the second part of Corollary~\ref{cor:overlap}.
In the six-term restriction their dual output basis vectors are
therefore proportional, respectively, to
\[
  e_1q^{\T},\qquad e_2q^{\T},\qquad (e_1+e_2)q^{\T}
\]
for a common nonzero $q$. These three vectors are linearly dependent,
contradicting that they belong to a basis.

\begin{proof}[Proof of Theorem~\ref{thm:ternary}]
  The table excludes rank below $14$. Every $14$-term decomposition
  would give one of $P_1,P_2,P_3$ up to symmetry, and all three have just
  been excluded. Thus the rank is at least $15$. Hopcroft and Kerr's
  algorithm~\cite{hopcroft1971minimizing} gives the matching upper bound.
\end{proof}

\section{Verification and limitations}\label{sec:verification}

These are computer-assisted proofs with two different verification
tasks. The subspace lower-bound tables are accompanied by proof records
checked by the verifier in Wang's framework~\cite{wang2026automated}.
The profile classifications are obtained by exhaustive searches, whose
completeness is proved in Appendix~\ref{sec:profile-completeness} and
supported by computational cross-checks. These searches do not yet have
traces replayed by a separate small checker.

For the first task, the trusted code consists of the field and linear
algebra routines, the tensor and symmetry action, the proof checkers,
and the verifier's sweep through the restriction spaces. The programs
that discover bounds and choose their proofs are outside this boundary.
There is an important qualification for rank-one-span records: the
verifier reruns the same exclusion engine used by the search, rather
than checking an independently generated proof of nonexistence. Its
implementation and the coverage proof in Appendix~\ref{sec:family} are
therefore part of the trust boundary. Expanding the orbit table to all
subspaces is another computation used by the profile search.

The binary verification takes about $8$ hours on $32$ cores; the
ternary verification takes seconds.
The companion repository identifies the artifacts and gives the commands
that reproduce every computation, with their expected outputs.

The following checks support the computations.
\begin{itemize}
  \item Monotonicity was checked on every covering pair of restriction
    spaces: $213{,}188{,}689$ pairs for the binary table and $969{,}696$
    for the ternary table, with no violation. Orbit counts, point-rank
    classes, the dichotomy in Proposition~\ref{prop:binary-table}, and the affine-plane rejections
    were also checked against the expanded tables.
  \item Removing inner lexicographic symmetry pruning gives the same
    negative answer in all $35$ binary cases, after $2{,}155{,}604{,}077$
    nodes. This check retains the structural reduction and the $35$
    symmetry-reduced outer cases. Over $\F_3$, removing symmetry pruning
    on the full rank-one candidate set takes $52{,}972{,}264$ nodes and
    gives the same three profile orbits.
  \item Positive controls include all three factor lists of a known
    $23$-term binary decomposition, which satisfy all capacities, and a
    restricted binary instance with $232$ profiles at $r=22$, for which
    searches with and without symmetry pruning agree. The generic
    enumerator also reproduces D'Ambrosio's classification~\cite[Section~4]{dambrosio2026exact}
    of $252$ supports in four orbits
    of sizes $63,21,126,42$ for the transposed $\mmt{3}{2}{4}$ instance.
  \item The rank-one-span implementations were compared on small
    instances where both complete, and tested on tensors with known
    decompositions and planted rank-one-spanned extensions. Such tests
    check implementation behavior; they do not replace the coverage proof.
  \item Over $\F_2$, every subspace lower bound is at most the corresponding
    upper bound obtained by flip-graph search~\cite{kauers2023flip}.
\end{itemize}

The bounds remain limited by the strength of the subspace lower-bound tables
and by the cost of both searches. A surviving profile need not lift to
a tensor decomposition, as the ternary case illustrates. Similarly,
failure to finish a rank-one-span search says nothing about whether its
current lower bound is tight. At the next binary target, $r=21$, every
capacity increases by one, and the argument forcing pairwise
rank-one differences no longer applies. This paper makes no claim
about existence or nonexistence of $21$-profiles. Useful next steps are
stronger restriction bounds and independently checkable traces for
the completed profile searches.


\begingroup\raggedright
\bibliography{refs}
\endgroup

\appendix

\section{Completeness and pruning of profile enumeration}\label{sec:profile-completeness}

We prove completeness for the searches used here. The table is invariant
under the search group and monotone in $S$. Points have a fixed total
order, and all capacities count selected points with multiplicity when
the implementation is used on multisets.

\begin{lemma}[capacity pruning]\label{lem:pruning}
  Rejecting a partial selection with a violated capacity cannot discard an
  admissible completion. Moreover, if a completed selection violates a
  capacity, it violates one on a subspace spanned by selected points.
\end{lemma}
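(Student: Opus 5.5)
The first assertion follows from monotonicity of the count $m(E)$. The second reduces an arbitrary violated subspace to the span of the selected points inside it, using monotonicity of the table.

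\emph{First assertion.} Fix a partial selection $P$ and a subspace $E\subseteq A^*$ with $m_P(E)>c_r(E)$. Any completion $P'\supseteq P$, as a multiset, satisfies $m_{P'}(E)\ge m_P(E)$, since adding points never decreases the count of points lying in $E$. So $P'$ violates the same capacity and is not admissible. Hence rejecting $P$ loses nothing.

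\emph{Second assertion.} Let a completed selection $P$ violate the capacity at $E$, and put $E'=\spn\{U\in P:U\in E\}\subseteq E$. Both counts see the same points: a selected point lies in $E'$ exactly when it lies in $E$, because $E'\subseteq E$ and every selected point of $E$ is in $E'$ by construction. So $m(E')=m(E)$, multiplicities included.

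It remains to compare capacities. Annihilators reverse inclusion, so $E'\subseteq E$ gives $E^\perp\subseteq E'^\perp$. Monotonicity of the table, $S'\subseteq S\Rightarrow L(S')\le L(S)$, gives $L(E^\perp)\le L(E'^\perp)$, and therefore $c_r(E')\le c_r(E)$. Combining,
\[
m(E')=m(E)>c_r(E)\ge c_r(E'),
\]
so $E'$ is a violated subspace spanned by selected points.

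One degenerate case needs a remark. If $E$ contains no selected point, then $m(E)=0$, and a violation would require $c_r(E)<0$, that is, $L(E^\perp)>r$. This cannot occur here: $r=L(A)$ and monotonicity give $L(S)\le r$ for every $S$. So every violation has $E'\ne0$, and in any case $E'=0$ is the span of the empty set of selected points.

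The only step that is more than bookkeeping is the direction of the capacity comparison. It relies on two facts: the annihilator reverses inclusion, and the table is monotone. I will cite the monotonicity stated in Section~\ref{sec:profile-algorithm} and checked in Section~\ref{sec:verification}.
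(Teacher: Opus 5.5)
Your proof is correct and matches the paper's argument. The first assertion uses that occupancy only grows under completion. The second replaces $E$ by the span $E'$ of its selected points, which has the same occupancy and satisfies $c_r(E')\le c_r(E)$ by monotonicity of the table. Your degenerate-case remark is unnecessary and need not assume $r=L(A)$: the general argument already handles $E'=0$, because $L(E^\perp)\le L(A)$ gives $c_r(0)\le c_r(E)$.
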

\begin{proof}
  Occupancy cannot decrease as points are added, while capacities are
  fixed. For the second assertion, suppose $E$ is violated and let $E'$
  be the span of the selected points in $E$. Then $E'\subseteq E$ and
  the two occupancies are equal. Monotonicity gives
  $L((E')^\perp)\ge L(E^\perp)$, hence $c_r(E')\le c_r(E)$.
  Thus $E'$ is also violated.
\end{proof}

The maintained span collection is sufficient by induction. After adding
a point $u$, a span of selected points is either an old span or
$E+\langle u\rangle$ for an old span $E$, including $E=0$. Old spans
containing $u$ have their counts incremented; a new span is counted
against the entire selection, not just its displayed generators.
Equal reduced row echelon bases are merged. Undo records restore this
state on backtracking. The generic search maintains dimensions
$1,\ldots,\dim A-1$. The binary search maintains dimensions through
six while branching and checks dimensions seven and eight at a leaf.
This delayed checking can add work but cannot cause a false rejection.
The full dual space has capacity $r$, since $L(0)=0$, and never rejects
a selection of at most $r$ nonzero points.

A saturated span cannot contain any further selected point. Marking its
remaining points unavailable is therefore safe. Summing the remaining
point multiplicities over all available future points gives an upper
bound on the number of possible additions. This may overestimate what
can be added simultaneously, which is harmless: a branch is rejected
only when even this upper bound is too small.

\begin{lemma}[canonical prefixes]\label{lem:prefix}
  For a permutation group acting on an ordered point set, every prefix
  of a lexicographically least sorted representative of a set or multiset
  orbit is lexicographically least in its own orbit.
\end{lemma}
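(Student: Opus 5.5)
The plan is to compare the two sequences directly under the fixed total order on points, using that lexicographic order on sorted sequences of equal length is decided by the first position where they differ. Let $x=(x_1\le\dots\le x_n)$ be the sorted representative of a set or multiset $X$ that is lexicographically least among sorted representatives of its orbit under the group $\Gamma$. Fix a prefix $p=(x_1,\dots,x_k)$; it is itself sorted and represents the sub-multiset $P=\{x_1,\dots,x_k\}$. We must show that for every $g\in\Gamma$ the sorted sequence of $gP$ is not lexicographically smaller than $p$.

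Suppose, for contradiction, that some $g$ gives a sorted sequence $y=(y_1\le\dots\le y_k)$ of $gP$ with $y<p$. Let $j$ be the first index with $y_j\ne x_j$, so $y_j<x_j$ and $y_i=x_i$ for $i<j$. Now take the whole image $gX=gP\cup g\{x_{k+1},\dots,x_n\}$, as a multiset union, and let $z$ be its sorted sequence. The main step is the following comparison, which I will prove by counting. For each threshold $t$, the number of entries of $z$ that are $\le t$ is at least the number of entries of $y$ that are $\le t$, because $gP\subseteq gX$ as multisets. Apply this at $t=y_j$. The sequence $y$ has at least $j$ entries $\le y_j$, so $z$ does too. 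On the other hand, $x$ has exactly $j-1$ entries $\le y_j$: its entries $x_1,\dots,x_{j-1}$ equal $y_1,\dots,y_{j-1}$, which are $\le y_j$, while $x_j>y_j$ and every later entry is at least $x_j$.

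It follows that $z$ and $x$ agree in their first $j-1$ positions. This is the delicate bookkeeping, and I expect it to be the main obstacle. Here is the argument I would write. Every entry of $gX$ smaller than $y_j$ that belongs to $g\{x_{k+1},\dots\}$ is bounded below by $\min gX$, which alone does not settle the matter. So instead of arguing position by position, I will compare counts at every threshold $t<y_j$. Among $x$, the number of entries $\le t$ equals that of $y$, since those entries lie among the first $j-1$ positions where $x$ and $y$ agree. Among $z$, the number is at least that of $y$. Hence at every threshold $t<y_j$, $z$ has at least as many entries $\le t$ as $x$, and strictly more at $t=y_j$.

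For sorted sequences of equal length $n$, the first difference appears at the smallest threshold where these counts differ, and the sequence with the larger count is lexicographically smaller. So either $z$ has a larger count than $x$ at some smallest threshold $t\le y_j$, or all counts below $y_j$ agree and the excess appears at $t=y_j$. In both cases $z<x$. Since $z$ is the sorted representative of $gX$, this contradicts the minimality of $x$ in its orbit.

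Finally, I will note that the argument uses only that $g$ is a bijection of the ordered point set. Multiplicities are carried along because the counting is done with multiplicity. Sets are therefore the special case of multiplicity at most one, so the same proof covers both.
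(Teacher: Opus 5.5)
Your proof is correct and is essentially the paper's argument. Both use $gP\subseteq gX$ as multisets: the paper phrases this as the $m$th entry of sorted $gX$ being at most the $m$th entry of sorted $gP$ for $m\le k$, and you phrase it equivalently as $\#\{z\le t\}\ge\#\{y\le t\}$ for every threshold $t$. Both then conclude that sorted $gX$ is lexicographically smaller than $x$ at or before the first discrepancy; the paper's positional form simply avoids needing your separate translation from threshold counts to lexicographic order.

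One thing to fix is the sentence claiming that $z$ and $x$ agree in their first $j-1$ positions. It is not justified by your counts, since the image of a tail element $x_m$ with $m>k$ may lie below $x_1$. It is also never used, because your final dichotomy already allows the first count difference to occur below $y_j$, so simply delete it.
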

\begin{proof}
  Let $P$ be the first $k$ entries of a sorted list $F$. If some group
  element makes the sorted list $gP$ smaller than $P$, let $i$ be their
  first differing position. The $j$th entry of sorted $gF$ is at most
  the $j$th entry of sorted $gP$ for every $j\le k$, since $gP$ is a
  submultiset of $gF$. Hence $gF$ is smaller than $F$, either before
  position $i$ or at that position. This contradicts the minimality of
  $F$.
\end{proof}

It follows that rejecting noncanonical prefixes retains a path to the
least representative of every admissible orbit. With a fixed binary
higher-rank part $W$, apply the lemma to the growing rank-one part
under $\Stab_G(W)$, which preserves both the rank-one candidates and
the capacities with $W$ already inserted. The outer reduction is
complete because every profile has a block by Lemma~\ref{lem:block},
the blocks form one $G$-orbit, and every subset of the higher-rank
elements in each fixed-block coset is considered. A profile may admit
more than one block, so the outer reduction can create redundancy;
it cannot omit a profile. Final canonicalization under $G$ removes
duplicate output orbits where necessary.

Finally, splitting at depth $d$ enumerates all retained prefixes of
length $d$ and assigns their continuations to separate jobs; shorter
terminal branches are handled during splitting. The increasing point
order assigns each remaining branch to a unique prefix. Therefore,
subject to completing every job without an early stop, the combined
search reports precisely the admissible profiles up to the chosen
symmetry. This proves the completeness used in
Proposition~\ref{prop:binary-enum} and in the ternary classification.

\section{Family-based search for rank-one-spanned extensions}\label{sec:family}

This section justifies the larger-quotient binary search introduced in
Section~\ref{sec:slice-search}. All dimensions below are vector-space
dimensions; a \emph{point} has dimension one and a \emph{line} has
dimension two. Work over $\F_2$ and retain $V\subseteq K$, $Q=K/V$,
$\pi:K\to Q$, and $\rho=\dim V$.

\subsection{Necessary conditions and safe completion}

Write $K=F_0\otimes O$, choosing $\dim F_0$ to be the smaller core
dimension. Each nonzero $f\in F_0$ defines a family and a linear map
\[
  \varphi_f:O\to Q,\qquad o\mapsto\pi(f\otimes o),\qquad
  I_f=\im\varphi_f.
\]
There are $N=2^{\dim F_0}-1$ families. Call a nonzero key \emph{occupied}
if it belongs to some $I_f$, and put $c(q)=\#\{f:q\in I_f\}$. Let
$b_0$ be the span of all rank-one matrices in $V$, and let $d_0=\dim b_0$.
A \emph{witness} of dimension $k$ is a subspace $W\subseteq Q$ for
which $\pi^{-1}(W)$ is rank-one spanned. Set
\[
  d_f(W)=\dim(I_f\cap W),\qquad n=\rho+k-d_0.
\]

\begin{lemma}\label{lem:family-count}
  Every witness $W$ is spanned by its occupied keys, and
  \begin{equation}\label{eq:family-count}
    \sum_f d_f(W)\ge n,\qquad
    \sum_{q\in W\setminus0}c(q)
    =\sum_f(2^{d_f(W)}-1)\ge n.
  \end{equation}
\end{lemma}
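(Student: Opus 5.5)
Set $U=\pi^{-1}(W)$. By the definition of a witness, $U$ is spanned by rank-one matrices of $K$, and $\dim U=\rho+k$. Every rank-one matrix in $K=F_0\otimes O$ has the form $f\otimes o$ with $f\in F_0$ and $o\in O$ nonzero. Its key is $\pi(f\otimes o)=\varphi_f(o)\in I_f$, so every rank-one matrix of $U$ lies in exactly one family, namely that of $f$, because over $\F_2$ the factor $f$ is determined by the matrix. Its key lies in $W\cap I_f$.

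The first claim follows by applying $\pi$ to a rank-one spanning set of $U$. Since $\pi(U)=W$, the keys of these matrices span $W$. Each key is either zero or belongs to some $I_f$, so it is occupied, and the zero keys contribute nothing to the span.

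For the first inequality, we count dimensions family by family. The rank-one matrices of $U$ in family $f$ are the $f\otimes o$ with $\varphi_f(o)\in W$, i.e.\ $o\in\varphi_f^{-1}(W\cap I_f)$. Their span is $f\otimes\varphi_f^{-1}(W\cap I_f)$, which has dimension $\dim\ker\varphi_f+d_f(W)$. The kernel part $f\otimes\ker\varphi_f$ consists of the rank-one matrices of $V$ in family $f$. So the span of all the family-$f$ pieces is contained in
\[
  b_0+\sum_f f\otimes\varphi_f^{-1}(W\cap I_f).
\]
Modulo $b_0$, family $f$ adds at most $d_f(W)$ dimensions: choose $o_1,\dots,o_{d}$ whose images form a basis of $W\cap I_f$; then $f\otimes\varphi_f^{-1}(W\cap I_f)$ is spanned by $f\otimes\ker\varphi_f\subseteq b_0$ together with the $f\otimes o_i$. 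Since $U$ is rank-one spanned, we get $\rho+k=\dim U\le d_0+\sum_f d_f(W)$, which is $\sum_f d_f(W)\ge n$.

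For the second relation, we double count pairs $(f,q)$ with $q\in (W\cap I_f)\setminus0$. Summing over $q$ gives $\sum_{q\in W\setminus0}c(q)$, by the definition of $c$. Summing over $f$ gives $\sum_f(2^{d_f(W)}-1)$, since over $\F_2$ a space of dimension $d$ has $2^d-1$ nonzero vectors. The bound by $n$ then follows from $2^d-1\ge d$ and the first inequality.

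The only delicate step is the bookkeeping modulo $b_0$. We have to make sure the kernel contributions of different families are all absorbed into the single space $b_0$ rather than counted separately. They are, because each $f\otimes\ker\varphi_f$ consists of rank-one matrices of $V$ and so lies in $b_0$. The rest is routine linear algebra.
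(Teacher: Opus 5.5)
Your proof is correct and follows the paper's own argument. You apply $\pi$ to a rank-one spanning set for the first claim, and you bound each family's contribution modulo $b_0$ by $d_f(W)$ via $\dim\varphi_f^{-1}(W)=\dim\ker\varphi_f+d_f(W)$, absorbing the kernel part into $b_0$. You then double count family--key incidences for the equality, and you spell out details the paper leaves implicit (that every rank-one matrix of $\pi^{-1}(W)$ lies in some family, and that all kernel pieces land in the single space $b_0$).
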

\begin{proof}
  Apply $\pi$ to a rank-one spanning set of $\pi^{-1}(W)$ to obtain
  the first assertion. Within family $f$, the allowable vectors form
  $\varphi_f^{-1}(W)$, of dimension
  $\dim\ker\varphi_f+d_f(W)$. The kernel contribution lies in $b_0$,
  so this family contributes at most $d_f(W)$ dimensions modulo $b_0$.
  Summing gives $\rho+k\le d_0+\sum_f d_f(W)$. Counting incidences of
  families and nonzero keys gives the displayed equality, and
  $2^d-1\ge d$ gives the remaining inequality.
\end{proof}

We call an initial subspace $X\subseteq Q$ that the search extends to a
candidate $W$ a \emph{root}. If an enumerated root $X$ lies in a witness
$W$, occupied keys in $W$
can be chosen to complete a basis modulo $X$. Thus every root contained
in a witness has a completion using only occupied keys.
There is also a useful filter for the last generator. If
$W=X\oplus\langle g\rangle$ and the current case assumes $d_f(W)\le h$,
then
\begin{equation}\label{eq:family-growth}
  d_f(W)\le
  \min\{d_f(X)+[g\in X+I_f],\ h,\ \dim I_f\}.
\end{equation}
Here $[g\in X+I_f]$ is $1$ if $g\in X+I_f$ and $0$ otherwise.
Indeed, adjoining one vector increases an intersection dimension by
at most one, and an increase requires $g\in X+I_f$. Rejecting when
the sum of these upper bounds is less than $n$ is safe by
\eqref{eq:family-count}. The cap $h$ is used only in cases for which
it is justified.

For a completed candidate, the program tests the incidence count in
\eqref{eq:family-count}, then the exact sum of intersection dimensions,
and finally \eqref{eq:span-test} by Gaussian elimination, starting with
$b_0$. The first two tests are only filters. Only the last decides
whether a candidate is a witness.

\subsection{The covering cases through dimension five}

The search first tests $k=0$ by checking whether $d_0=\rho$, then tests
successive dimensions up to $e=r-\rho$. Duplicate candidates do not
affect correctness. Whenever
a root below is completed, all independent occupied-key completions
are considered, subject only to the safe filters above.

For $k=1$, enumerate occupied keys with $c(q)\ge n$; for $k=2$,
enumerate pairs of independent occupied keys. Lemma~\ref{lem:family-count}
proves coverage. For $k\ge3$ let $d_{\max}=\max_f d_f(W)$. Two cases
apply at every dimension:
\begin{enumerate}
  \item If $d_{\max}=k$, enumerate all $k$-subspaces of every $I_f$.
  \item If $d_{\max}=k-1$, enumerate every $(k-1)$-subspace of every
    $I_f$ and complete it by one occupied key. The growth filter uses
    $h=k-1$.
\end{enumerate}
The witness itself supplies the indicated root in each case. For
$k=3$, the remainder has $d_f\le1$, hence $\sum_f d_f\le N$.
This is impossible when $n>N$, which is the search's required condition
for this remainder. If the condition fails, the family engine cannot
claim an exclusion.

\paragraph{Dimension four.}
The remainder has $d_f\le2$. Require $n\ge N+3$. There are at least
$n-N\ge3$ families for which $D_f=I_f\cap W$ is a line, since
$\sum_f d_f\le N+\#\{f:d_f=2\}$. Count these lines with family
multiplicity. The following cases cover every arrangement.
\begin{enumerate}
  \item[$\gamma$.] A line occurs in at least three family images.
    Enumerate that line and add two occupied keys.
  \item[$\alpha$.] Otherwise, two distinct family lines meet in a point.
    Enumerate pairs of family lines through a point of multiplicity at
    least two. Their span has dimension three; add one occupied key.
  \item[$\beta$.] Otherwise, the distinct lines are pairwise disjoint.
    If there are at least three, choose $D_1,D_2,D_3$. Then
    $W=D_1\oplus D_2$. A nonzero $p\in D_3$ has a unique expression
    $p=x+y$, with nonzero $x\in D_1$, $y\in D_2$. Enumerate the occupied
    triples $x,y,x+y$, a line in one family through $x$, and a line in
    another family through $y$. The pair $D_1,D_2$ is included and spans $W$.
  \item[$\delta$.] The only remaining possibility is two distinct
    disjoint lines, one occurring exactly twice. Enumerate lines in
    exactly two family images and a disjoint line in another family.
    Their direct sum is $W$.
\end{enumerate}
These cases are exhaustive: if neither $\gamma$ nor $\alpha$ applies,
each distinct line occurs at most twice and the lines are pairwise
disjoint. At least three distinct lines give $\beta$; exactly two
give $\delta$ because at least three occur with multiplicity. A single
distinct line would have given $\gamma$. The last-generator growth
filters in this remainder use $h=2$.

\paragraph{Dimension five.}
The remainder has $d_f\le3$. Require $n\ge N+4$, and write
$t_i=\#\{f:d_f=i\}$. Equation~\eqref{eq:family-count} implies
\[
  2t_3+t_2\ge n-N\ge4.
\]
We use the program's case labels to identify the following roots.

If $t_3\ge2$, choose two three-dimensional intersections from distinct
families. They meet nontrivially inside the five-dimensional $W$.
Case C3 enumerates pairs of three-subspaces of family images through
a common occupied point of multiplicity at least two. Distinct such
spaces span dimension four or five, so add one or zero keys,
respectively. Coincident spaces are covered by C3$'$: enumerate
three-subspaces of pairwise family-image intersections and add two keys.

If $t_3=1$, let $D$ be that three-dimensional intersection. There are
at least $n-N-2$ other families whose intersections are lines.
If one such line meets $D$ in exactly a point, C4 enumerates the
three-space and the line through their common point; their span has
dimension four, so add one key. If a line is disjoint from $D$,
it is disjoint from any line inside $D$. Case C6 enumerates disjoint
lines from distinct families and adds one key to their four-dimensional
span. If all the lines lie inside $D$, case C9 enumerates
three-subspaces of an image containing lines from at least $n-N-2$
other families and adds two keys. Thus all possibilities with $t_3=1$
are covered.

If $t_3=0$, there are at least $n-N\ge4$ family lines. A disjoint
pair gives C6. Otherwise the distinct lines pairwise intersect.
If only one distinct line occurs, C10 enumerates a line lying in at
least $n-N$ images and adds three keys. If at least two distinct lines
occur, choose two and let their intersection be $p$ and their span be $X$, of dimension
three. Any further line meeting both either lies in $X$ or passes
through $p$. A line outside $X$ meets $X$ only at $p$, so its presence
forces every family line inside $X$ to pass through $p$ as well.
Consequently all lines either lie in a common three-space or pass
through a common point.

In the common-three-space case, C7 enumerates the spans of two
intersecting family lines that contain lines from at least $n-N$
families, and adds two keys. In the common-point case, consider the
line directions modulo that point. If they span at least three
dimensions, C8 enumerates the common point and three independent
line directions from distinct families, then adds one key to this
four-dimensional root. The point lies in at least $n-N$ images.
If the directions span two dimensions, C7 applies; if they span one,
C10 applies. This proves coverage in the final remainder. Growth
filters here use $h=3$. The implementation's additional case C5 is a
fast path covered already by C4, C6, and C9; it is not needed for
the coverage argument.

\begin{proposition}[completed family search]\label{prop:family-complete}
  Suppose $0\le e=r-\rho\le5$, the numerical conditions above hold
  for every reached remainder, and the family search finishes without
  exhausting its operation budget. It finds a witness if and only if
  $\R(T)\le r$. A completed negative search therefore proves
  $\R(T)\ge r+1$.
\end{proposition}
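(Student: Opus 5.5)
The claim is an equivalence, and I would prove the two directions separately. Both rest on Proposition~\ref{prop:slices} and the reformulation through $Q=K/V$: $\R(T)\le r$ holds exactly when some subspace $W\subseteq Q$ of dimension $k\le e$ passes the exact test \eqref{eq:span-test}, that is, when some witness of dimension at most $e$ exists. A witness of dimension $k<e$ can be enlarged to one of dimension $e$ only if enough room remains, so I would keep every $k$ from $0$ to $e$ rather than reduce to $k=e$. This matches the program's loop over successive dimensions.

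\emph{Soundness.} Suppose the search reports a candidate $W$. That report comes only after the final Gaussian elimination test \eqref{eq:span-test}, starting from $b_0$. So $\pi^{-1}(W)$ is rank-one spanned of dimension $\rho+k\le r$ and contains $V$. Proposition~\ref{prop:slices} then gives $\R(T)\le r$. The filters and the enumeration order play no role in this direction.

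\emph{Completeness.} Assume $\R(T)\le r$, fix a witness $W$ of dimension $k\le e$, and show that the search reaches it. I would run through the dimensions in the order the program does.
\begin{enumerate}
  \item For $k=0$, the check $d_0=\rho$ is exactly \eqref{eq:span-test}.
  \item For $k=1$ and $k=2$, Lemma~\ref{lem:family-count} shows that $W$ is spanned by occupied keys. For $k=1$ it also gives $c(q)\ge n$, so the enumerated lists contain $W$.
  \item For $3\le k\le5$, I would split by $d_{\max}$ and then by the case analyses above: the cases $d_{\max}\in\{k,k-1\}$, the $k=3$ remainder ruled out by $n>N$, the cases $\gamma,\alpha,\beta,\delta$ for $k=4$, and C3, C3$'$, C4, C6, C7, C8, C9, C10 for $k=5$.
\end{enumerate}
In each case I would exhibit a root $X\subseteq W$ that the program enumerates. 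Because $W$ is spanned by occupied keys, $X$ extends to $W$ by adjoining occupied keys independent modulo $X$, and every such completion is tried. Duplicates are harmless.

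The remaining point is that no filter discards $W$ along this path. The incidence and intersection-sum tests are necessary conditions by \eqref{eq:family-count}. The growth bound \eqref{eq:family-growth} is a valid upper bound, provided the cap $h$ is used only in the remainder where the case hypothesis ($d_f\le h$) is true of $W$. Rejecting when the sum of upper bounds is below $n$ is therefore safe. The numerical hypotheses ($n>N$, $n\ge N+3$, $n\ge N+4$) are exactly what the remainder arguments use, and the no-budget-exhaustion hypothesis ensures every enumerated root is actually processed.

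The main obstacle is checking that the case lists for $k=4$ and $k=5$ really exhaust all configurations of the intersections $I_f\cap W$. This is the pigeonhole count from \eqref{eq:family-count}, $2t_3+t_2\ge n-N$, followed by the geometric dichotomy for pairwise-meeting lines in $W$: either they lie in a common three-space or they pass through a common point. I would check this by re-deriving, in each branch, that the claimed root lies inside $W$, has the stated dimension, and meets the stated multiplicity requirements (for example, a point lying in at least $n-N$ images). With that done, completeness follows, and together with soundness it gives the equivalence and hence $\R(T)\ge r+1$ from a completed negative search.
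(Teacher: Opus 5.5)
Your proposal is correct and follows essentially the same route as the paper. Soundness comes from the final exact test \eqref{eq:span-test} together with Proposition~\ref{prop:slices}. Completeness comes from locating a covering-case root inside a witness, completing it by occupied keys, and checking that the filters of Lemma~\ref{lem:family-count} and \eqref{eq:family-growth} keep that completion.

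The one difference is small. The paper fixes a witness of smallest dimension, which guarantees that the search actually reaches dimension $k$. With an arbitrary witness, as in your version, you need the one-line remark that if the search stopped at an earlier dimension, it had already found a witness.
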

\begin{proof}
  A found witness gives a decomposition by
  Proposition~\ref{prop:slices}. Conversely, if the rank is at most $r$,
  choose a witness of smallest dimension $k\le e$. The initial test
  finds it if $k=0$. Otherwise the search reaches $k$, and one of the
  covering cases contains a root lying in the witness. Occupied keys
  complete that root to the witness; Lemma~\ref{lem:family-count} and
  \eqref{eq:family-growth} show that every filter retains this completion.
  The final exact test accepts it.
\end{proof}

The hard binary restrictions treated by this engine, after the chosen
cyclic orientation, have $\rho=9$, core dimensions $3$ and $9$,
$\dim Q=18$, $N=7$, and $d_0\in\{0,3\}$. At $k=3,4,5$ the smallest
values of $n$ are $9,10,11$, respectively, so all the numerical
conditions hold. These parameters describe the relevant family-search
instances, not every restriction in the table. Unsupported parameters
or an exceeded budget give no claim; an affordable exhaustive search
may be used instead. Verification recomputes an exclusion at the
recorded target $L(S)-1$, in the recorded orientation, with an
operation budget of $4\cdot10^{13}$.

\section{Tables and outer cases}\label{sec:data}

\subsection{The subspace lower-bound tables}

Tables~\ref{tab:binary} and~\ref{tab:ternary} give the distributions
of the bounds over all restriction spaces $S$. An entry $b\ (m)$
means that exactly $m$ spaces of that dimension have $L(S)=b$.
The bounds are constant on the symmetry orbits; these are distributions
of lower bounds, not claims of exact rank on every orbit.

\begin{table}[!htbp]
  \centering\small
  \begin{tabular}{@{}rr>{\raggedright\arraybackslash}p{11.6cm}@{}}
    \toprule
    $\dim S$ & spaces & $L(S)$ (count)\\
    \midrule
    0 & 1 & 0 (1)\\
    1 & 511 & 3 (49), 6 (294), 9 (168)\\
    2 & 43,435 & 6 (980), 9 (15,827), 12 (25,284), 14 (1,344)\\
    3 & 788,035 & 9 (10,157), 12 (276,458), 13 (672), 14 (271,362),
    15 (226,842), 16 (2,352), 17~(192)\\
    4 & 3,309,747 & 11 (49), 12 (56,938), 13 (3,430), 14 (368,088),
    15 (1,590,981), 16 (1,219,365), 17~(70,896)\\
    5 & 3,309,747 & 14 (8,722), 15 (152,341), 16 (1,108,576), 17 (2,040,108)\\
    6 & 788,035 & 15 (14), 16 (7,497), 17 (200,312), 18 (580,212)\\
    7 & 43,435 & 18 (11,417), 19 (32,018)\\
    8 & 511 & 19 (511)\\
    9 & 1 & 20 (1)\\
    \bottomrule
  \end{tabular}
  \caption{The binary $\mmt{3}{3}{3}$ table: $8{,}283{,}458$ subspaces
  in $496$ orbits.}\label{tab:binary}
\end{table}

\begin{table}[!htbp]
  \centering\small
  \begin{tabular}{@{}rrl@{}}
    \toprule
    $\dim S$ & spaces & $L(S)$ (count)\\
    \midrule
    0 & 1 & 0 (1)\\
    1 & 364 & 3 (52), 6 (312)\\
    2 & 11,011 & 6 (1,001), 9 (10,010)\\
    3 & 33,880 & 9 (7,544), 12 (26,336)\\
    4 & 11,011 & 10 (13), 12 (10,348), 13 (650)\\
    5 & 364 & 13 (52), 14 (312)\\
    6 & 1 & 14 (1)\\
    \bottomrule
  \end{tabular}
  \caption{The ternary $\mmt{2}{3}{3}$ table: $56{,}632$ subspaces
  in $31$ orbits.}\label{tab:ternary}
\end{table}

In the binary table, $\BinaryFlattenCount$ entries are proved by flattening, $\BinaryDegenerateCount$ by
degenerate reduction, $\BinaryForcedCount$ by forced products, $\BinaryBacktrackingCount$ by backtracking,
and $\BinaryRankOneSpanCount$ by rank-one-span searches; the ternary counts are
$\TernaryFlattenCount,\TernaryDegenerateCount,\TernaryForcedCount,\TernaryBacktrackingCount,\TernaryRankOneSpanCount$. These counts describe the final proof of each entry, not
every technique used to obtain it: the binary improvements came from
rank-one-span searches on small restrictions followed by propagation
through degenerate reduction and backtracking, and the ternary
rank-one-span searches used the exhaustive engine and were likewise
followed by propagation.

In terms of the subspaces $E\subseteq A^*$ of Section~\ref{sec:binary},
the binary table has $L(E^\perp)=19$ for all $511$ one-dimensional $E$,
the values of Proposition~\ref{prop:binary-table} for the $43{,}435$
two-dimensional $E$, and $L(E^\perp)=17$ for the $9{,}114$
three-dimensional $E$ used for the immediate outer rejections. In the
ternary table, the $52$ rank-one projective points $U$ have
$L(\ker U)=13$ and the $312$ rank-two points have $L(\ker U)=14$, as in
\eqref{eq:ternary-points}.

\subsection{The binary outer cases and the ternary output}

Encode a binary first-factor matrix as
$\sum_{i,j=0}^2 U_{ij}2^{3i+j}$. Thus integers $1,\ldots,7$ are the
nonzero elements of the fixed first-row block. Table~\ref{tab:outer}
lists all cases requiring search. The other cases are rejected by the
affine-plane capacity before branching. For completeness their
higher-rank parts are listed in Table~\ref{tab:rejected}; the four
possible offending planes are
\[
  \begin{aligned}
    A_0&=\{10,11,12,13\},& A_1&=\{80,81,82,83\},\\
    A_2&=\{80,81,84,85\},& A_3&=\{84,85,86,87\}.
  \end{aligned}
\]
Every row of Table~\ref{tab:rejected} contains its indicated plane.

\begin{table}[!htbp]
  \centering\small
  \begin{tabular}{@{}rrlrr@{}}
    \toprule
    case & $|W|$ & $W$ & jobs & nodes\\
    \midrule
    0 & 0 & $\varnothing$ & 17 & 52,233\\
    1 & 1 & 10 & 520 & 1,055,973\\
    2 & 1 & 80 & 692 & 1,827,677\\
    3 & 1 & 84 & 292 & 972,527\\
    4 & 2 & 10 11 & 662 & 198,683\\
    5 & 2 & 10 12 & 3,592 & 1,733,655\\
    6 & 2 & 80 81 & 3,950 & 2,263,334\\
    7 & 2 & 80 84 & 4,106 & 3,171,559\\
    8 & 2 & 84 85 & 1,622 & 1,043,591\\
    9 & 3 & 10 11 12 & 2,301 & 277,935\\
    10 & 3 & 10 12 14 & 2,253 & 215,795\\
    11 & 3 & 80 81 82 & 1,621 & 118,954\\
    12 & 3 & 80 81 84 & 7,062 & 1,028,501\\
    13 & 3 & 80 84 85 & 8,385 & 1,069,256\\
    14 & 3 & 84 85 86 & 1,362 & 133,278\\
    16 & 4 & 10 11 12 14 & 1,561 & 56,025\\
    18 & 4 & 80 81 82 84 & 2,582 & 90,540\\
    20 & 4 & 80 81 84 86 & 5,520 & 244,230\\
    21 & 4 & 80 84 85 86 & 2,730 & 104,728\\
    \midrule
    \multicolumn{3}{l}{Total, all infeasible} & 50,830 & 15,658,474\\
    \bottomrule
  \end{tabular}
  \caption{The nineteen searched binary cases. Every search finishes
  without an admissible profile.}\label{tab:outer}
\end{table}

\begin{table}[!htbp]
  \centering\small
  \begin{tabular}{@{}rrlc@{}}
    \toprule
    case & $|W|$ & $W$ & offending plane\\
    \midrule
    15 & 4 & 10 11 12 13 & $A_0$\\
    17 & 4 & 80 81 82 83 & $A_1$\\
    19 & 4 & 80 81 84 85 & $A_2$\\
    22 & 4 & 84 85 86 87 & $A_3$\\
    23 & 5 & 10 11 12 13 14 & $A_0$\\
    24 & 5 & 80 81 82 83 84 & $A_1$\\
    25 & 5 & 80 81 82 84 85 & $A_2$\\
    26 & 5 & 80 81 84 85 86 & $A_2$\\
    27 & 5 & 80 84 85 86 87 & $A_3$\\
    28 & 6 & 10 11 12 13 14 15 & $A_0$\\
    29 & 6 & 80 81 82 83 84 85 & $A_1$\\
    30 & 6 & 80 81 82 84 85 86 & $A_2$\\
    31 & 6 & 80 81 84 85 86 87 & $A_2$\\
    32 & 7 & 80 81 82 83 84 85 86 & $A_1$\\
    33 & 7 & 80 81 82 84 85 86 87 & $A_2$\\
    34 & 8 & 80 81 82 83 84 85 86 87 & $A_1$\\
    \bottomrule
  \end{tabular}
  \caption{The sixteen immediate binary rejections. The indicated four
  selected points span a space of capacity three.}\label{tab:rejected}
\end{table}

No search reached a stopping limit. Every job in the main binary
enumeration had a node budget of $10^8$, above the total of
Table~\ref{tab:outer}, and no profile was found, so the solution limit
never applied. The main ternary run took $7{,}024$ nodes against a budget
of $10^9$ and returned three profiles, one per symmetry orbit, against a
limit of $1{,}000$. The symmetry-free cross-checks described in
Section~\ref{sec:verification} also completed without reaching a stopping limit.

For the ternary output, normalize each projective matrix so its first
nonzero coefficient in row-major order is one, and encode it as
$\sum_{i=0}^1\sum_{j=0}^2 U_{ij}3^{3i+j}$. The common thirteen codes
are
\[
  1,3,9,13,27,28,55,81,84,165,243,252,351.
\]
The extra codes are $364$ for $P_1$, $495$ for $P_2$, and $715$ for
$P_3$. These are precisely the profiles displayed in
Section~\ref{sec:ternary}.

\FloatBarrier

\end{document}